\documentclass[a4paper,12pt]{article}
\usepackage{amsmath}
\usepackage{graphicx}
\usepackage{rotating}
\usepackage{wasysym}
\DeclareGraphicsExtensions{.pdf,.png,.jpg}
\usepackage{ chicago}
\usepackage{ klett}
\usepackage{ def_the}

\usepackage{ Fig_size_A}
\usepackage{ kfig}

\textwidth14.5cm \textheight8.96in

\topmargin0pt

\newtheorem{corollary}[theorem]{Corollary}
\newtheorem{assumption}[theorem]{Assumption}

\title{}
\author{Eckhard Platen}
\begin{document}
\thispagestyle{empty} \vspace*{1.0cm}

\begin{center}
{\LARGE\bf  Benchmark-Neutral Pricing}
\end{center}

\vspace*{.5cm}
\begin{center}

{\large \renewcommand{\thefootnote}{\arabic{footnote}} {\bf Eckhard
Platen}\footnote{University of Technology Sydney,
  School of Mathematical and Physical Sciences}}
\vspace*{2.5cm}

\today

\end{center}

\begin{minipage}[t]{13cm}
The paper introduces benchmark-neutral pricing and hedging for long-term   contingent claims.  It employs the growth optimal portfolio  of the stocks   as num\'eraire and the  new benchmark-neutral  pricing measure for pricing. For a realistic  parsimonious  model, this  pricing measure turns out to be an equivalent probability measure, which is not the case for the risk-neutral pricing measure. Many risk-neutral prices of long-term contracts are more expensive than necessary.   Benchmark-neutral pricing identifies  the minimal possible prices  of contingent claims, which is  illustrated with remarkable accuracy for a long-term zero-coupon bond.  
\end{minipage}
\vspace*{0.5cm}

{\em JEL Classification:\/} G10, G11

\vspace*{0.5cm}
{\em Mathematics Subject Classification:\/} 62P05, 60G35, 62P20
\vspace*{0.5cm}\\
\noindent{\em Key words and phrases:\/} long-term pricing, benchmark approach, change of  num\'eraire,     activity time, 
  squared Bessel process, hedging.

	\newpage
	\section{Introduction}\label{section.intro}
Risk-neutral pricing employs the savings account as num\'eraire and represents the preferred pricing method of the classical  finance theory, see, e.g. , \citeN{Merton92}, \citeN{Cochrane01}, and \citeN{Jarrow22}. 
 By employing  realistic long-term dynamics of the {\em growth optimal portfolio} (GOP) of stocks,  see, e.g., \citeN{Kelly56} and \citeN{Merton92}, the current paper  shows   that the putative risk-neutral measure is unlikely to constitute an equivalent probability measure. These findings are not in line  with 
  the {\em No Free Lunch with Vanishing Risk} (NFLVR) no-arbitrage condition of \citeN{DelbaenSc98}, which is equivalent to the assumption about the existence of an equivalent risk-neutral probability measure. In \citeN{LoewensteinWi00a} and \citeN{Platen01a} it was pointed out that perfectly acceptable financial market models exist where the NFLVR condition fails. Under the more general benchmark approach, see \citeN{Platen06ba} and \citeN{PlatenHe06}, which removes a restrictive assumption of the classical finance theory, the failure of the NFLVR condition  does not represent a problem. The benchmark approach only  requires  the existence of the GOP and not the existence of an equivalent risk-neutral probability measure.	This matters because when the putative risk-neutral measure is not an equivalent probability measure, it is shown in \citeN{Platen02g} and \citeN{Platen06ba}  that risk-neutral pricing leads to more expensive prices and hedges than necessary.  \\ The existence of the GOP  is an extremely weak and easily verifiable {\em no-arbitrage condition}   because \citeN{KaratzasKa07} and \citeN{KaratzasKa21} have shown that  the existence of the GOP is equivalent to their {\em No Unbounded Profit with Bounded Risk} (NUPBR)  condition. This no-arbitrage condition is weaker than the NFLVR  condition. The current paper will employ a realistic long-term model of the {\em stock GOP}, which is the GOP of the investment universe formed only by the stocks  (without the savings account). For this realistic model the NUPBR condition holds but the NFLVR condition fails.\\

 \noindent The benchmark approach provides  the pricing concept of real-world pricing, where the GOP of the entire market is taken as the num\'eraire and the real-world probability measure acts as the pricing measure; see \citeN{PlatenHe06}. Real-world pricing avoids the additional assumptions that a change of the GOP of the entire market as num\'eraire would require. 
 \\ For a typical market that consists of stocks and the savings account, the GOP of the entire market is a highly leveraged portfolio that goes short in the savings account and long in a portfolio of stocks; see \citeN{FilipovicPl09}. Unfortunately, this GOP  does  not represent a suitable num\'eraire or even a desireable investment portfolio; see e.g., \citeN{Samuelson79}. To generate it physically one must go  short in the savings account. Since only  discrete-time dynamic asset allocation is feasible, one could obtain    as a proxy of the GOP of the entire market  a negative portfolio, which would fail as a num\'eraire for pricing and hedging. \\
 As a  feasible alternative to real-world pricing, which provides the minimal possible prices, the current paper proposes  employing the  stock GOP as a num\'eraire. As shown in \citeN{PlatenRe20}, the  stock GOP can be approximated by a well-diversified, guaranteed strictly positive portfolio of stocks. For instance, the MSCI-Total Return Stock Index (MSCI)  of the developed markets could serve as a reasonable   proxy of the stock GOP. However, better proxies are available, as shown in \citeN{PlatenRe20}. \\
 	The proposed new pricing  method is called {\em benchmark-neutral} (BN) {\em pricing}. It has wider applicability than risk-neutral pricing and can be conveniently implemented, as will be demonstrated in the current paper. It provides the minimal possible prices  of nonnegative contingent claims when  
 	the ratio of the stock GOP over the GOP of the entire market is a martingale. This ratio represents the Radon-Nikodym derivative of the respective {\em BN pricing measure} and is theoretically  a supermartingale. This means, its current value is greater than or equal to its expected future values given the current information.
When appropriate dynamics of the stock GOP are assumed, as will be suggested in the current paper, the above-mentioned ratio turns out to be a martingale, and the BN prices of contingent claims coincide with the respective minimal possible prices obtained via the real-world pricing formula; see  \citeN{DuPl16}.\\
	 Motivated by the structure of the stochastic differential equation (SDE) of the stock GOP of a continuous market,  the {\em minimal market model} (MMM) was proposed in \citeN{Platen01a} as a potential model for the stock GOP.   The current paper points  out that the stock GOP value under the MMM can be interpreted as the continuous time limit of the population size of  a birth-and-death process, see \citeN{Feller71}, where independently  wealth units give  birth to new ones or die.\\ As the  paper will demonstrate, the MMM evolving in some {\em activity time}  captures remarkably  well the \textquoteleft natural' evolution of well-diversified stock indexes. 	It models parsimoniously the volatility
	  of the  normalized stock GOP as a scalar  diffusion process that is evolving in some activity time. The latter is reflecting the trading activity and can be observed but only its linear average needs to be modeled for the pricing and hedging of long-term zero-coupon bonds. This circumvents  the need to observe and model all parts of the volatility, which  seems to be extremely difficult  as pointed out by the \textquoteleft leverage effect puzzle' of \citeN{AitSahalia12}. \\
	  The current paper points in a direction where this puzzle, and more general stochastic volatility modeling, could find  solutions. It 
	  applies the MMM in some activity time 	 	
	 	for the stock GOP, and illustrates the fitting of its parameters, as well as    the BN pricing and accurate hedging of a long-term zero-coupon bond that pays one unit of the savings account at maturity. \\
	 	 In previous works,
	 	 long-term zero-coupon bonds and other  derivatives have been priced and hedged using the MMM when it evolves in calendar time and by applying real-world pricing  assuming that the GOP of the entire market equals the stock GOP; see, e.g., \citeN{PlatenHe06}, \citeN{FergussonPl22} and \citeN{BaroneadesiPlSa24}. The novelty of the current paper is that the GOP of the entire market is allowed to be  different from the stock GOP. Furthermore, the stock GOP  dynamics are far more realistically modeled by introducing some  flexible  stochastic activity time for the MMM. The activity time is observable and only its trendline needs to be estimated for the pricing and hedging of long-term zero-coupon bonds. This is  important for  implementations of BN pricing and hedging in the financial services industry because long-term zero coupon bonds can serve as the building blocks of long-term annuities, pensions,  green bonds, and life insurance contracts. \\

	     The paper is organized as follows:  Section 2  introduces  the market setting and real-world pricing under the benchmark approach.
	      The new concept of benchmark-neutral pricing is presented in Section 3. 
	       The  model of the stock GOP is introduced in Section 4. Section 5   illustrates the BN pricing and   hedging of a long-term zero-coupon bond.
	       
	\section{Market Setting}\setcA
\subsection{Primary Security Accounts}
	 The  modeling is performed on a filtered probability space $(\Omega,\mathcal{F},\underline{\cal{F}},P)$, satisfying the usual conditions; see, e.g., \citeN{KaratzasSh88} and \citeN{KaratzasSh98}. The filtration $\underline{\cal{F}}=(\mathcal{F}_t)_{t\in[t_0,\infty)}$ describes the evolution of market information
	 over time. In the given continuous  market we model 
	 $d\in \{1,2,...\}$ adapted, nonnegative {\em primary security accounts}, denoted by $S^1_t,S^2_t,...,S^d_t$, 
	 where all  dividends or other payments are reinvested. We interpret the $d$ primary security accounts as stocks, which are above denominated   in units of the savings account $S^0_t=1$.  We assume for the investment universe consisting of the $d$ stocks  that a   {\em growth optimal portfolio} (GOP) $S^*_t$ exists, which   is the strictly positive stock portfolio with maximum growth rate;  see \citeN{Kelly56} and \citeN{Merton92}. This portfolio we call the {\em stock GOP}. The $j$-th  primary security account ${\tilde S}^j_t=\frac{S^j_t}{S^*_t}$, $j\in\{1,...,d\}$, when denominated in units of the stock GOP $S^*_t$,   forms a continuous, integrable driftless stochastic process, which is a $(P,\underline{\cal{F}})$-local
 martingale and, thus, a  $(P,\underline{\cal{F}})$-supermartingale; see equation (10.3.2) in \citeN{PlatenHe06}. It has to be emphasized that it has not to be a martingale where its current value would equal its expected future values under the current information.
	    We emphasize that  the savings account $S^0_t$ is not included in the investment universe of the stock GOP $S^*_t$.\\ 
	   The stock GOP $S^*_t$ in savings account denomination is continuous and satisfies according to
	      Theorem 3.1 in \citeN{FilipovicPl09} the stochastic differential equation (SDE)
	      \begin{equation}\label{dS*T}
	\frac{dS^{*}_t}{S^{*}_t}=\lambda^*_t dt +
	 \theta_t  (  \theta_t dt+d W_t) \end{equation}
	 for $t\in[t_0,\infty)$ with $S^*_{t_0}>0$.
Here, $\lambda^*_t$ denotes the {\em net risk-adjusted return} of  $S^*_t$  and $\theta_t$ its {\em volatility}. The real-valued $(P, \mathcal{\underline F})$-Brownian motion $W=\{W_t, t\in[t_0,\infty)\}$ 
   models  in calendar time the non-diversifiable randomness of the stocks in savings account denomination. 
\\
We  extend the above market of stocks by adding the savings account $S^0_t=1$ as an additional primary security account and assume that the GOP     $S^{**}_t$  of the extended market exists.  In line with Theorem 7.1 in \citeN{FilipovicPl09}, we assume  that the GOP  $S^{**}_t$ of the extended market  satisfies the SDE \BE \label{e.4.2'}
\frac{dS^{**}_t}{S^{**}_t}=
\sigma^{**}_t(  \sigma^{**}_t dt+d W_t) \EE
 with initial value $S^{**}_{t_0}=1$ and {\em market price of risk}
\begin{equation}\label{sigma**}
\sigma^{**}_t=\frac{\lambda^*_t  +
	(\theta_t)^2  }{\theta_t}
\end{equation}  with respect to $W$ for $t\in[t_0,\infty)$. 
\subsection{Real-World Pricing}
	Consider a bounded stopping time $T > t_0$, and let $\mathcal{L}^1(\mathcal{F}_T)$
	denote the set of integrable, $\mathcal{F}_T$-measurable random variables in the given filtered probability space.
	\begin{definition}\label{def2.1}
		 For a bounded stopping time $T \in(t_0,\infty)$, a 
		  nonnegative payoff $H_T$, denominated in units of the savings account, is called a {\em contingent claim} if $\frac{H_T}{S^{**}_T} \in
	 \mathcal{L}^1(\mathcal{F}_T)$.
	\end{definition}
	 	We denote by ${\bf E}^{P}(.|\mathcal{F}_t)$  the conditional expectation under the real-world probability measure $P$, conditional on the information $\mathcal{F}_t$ available at time $t$.  As shown in \citeN{PlatenHe06} and \citeN{DuPl16},
	  for a
		 contingent claim $\ H_T$ with maturity at a bounded stopping time $T$ the {\em real-world pricing formula}
		  \begin{equation}
		   H_t=S^{**}_t{\bf E}^{P}(\frac{{H}_{T}}{S^{**}_T}|\mathcal{F}_t)\end{equation} determines its, so called, {\em fair}  price $ H_t$ for all $t\in[t_0,T]$. The ratio $\frac{H_t}{S^{**}_t}$ forms a $(P, \mathcal{\underline F})$-martingale, which is $P$-almost surely unique for the given value $\frac{H_T}{S^{**}_T}$ at  the maturity  $T$.	The real-world pricing formula uses the GOP $S^{**}_t$ of the extended market  as num\'eraire and the real-world probability measure $P$ as pricing measure.  It has been shown by \citeN{DuPl16} that, when a contingent claim is replicable,  its fair price process coincides with the value process of the minimal possible self-financing hedge portfolio that replicates its payoff.\\
		  The Law of One Price of the classical finance theory does no longer hold because there exist  other  pricing rules that can be applied  to pricing and hedging, including the popular risk-neutral pricing rule. 
		  However, these  pricing rules  never provide  lower prices for nonnegative replicable contingent claims than the real-world pricing formula because all self-financing portfolios that hedge a given contingent claim form $(P, \mathcal{\underline F})$-supermartingales when denominated in $S^{**}_t$. The $(P, \mathcal{\underline F})$-martingale among these $(P, \mathcal{\underline F})$-supermartingales coincides with the  fair hedge portfolio value process in the denomination of the GOP $S^{**}$ of the extended market. It  is the least expensive hedge portfolio that replicates the contingent claim; see Lemma A.1 in \citeN{DuPl16}. More expensive self-financing hedge portfolios can exist that replicate the contingent claim. Since these  replicate the contingent claim, one may not even realize that they are more expensive than necessary.

	 \section{Benchmark-Neutral Pricing}\setcA
	 
	 \subsection{Change of Num\'eraire} 
	  The num\'eraire for real-world pricing is the GOP $S^{**}_t$ of the extended market, which is, in reality, a highly leveraged portfolio that goes  long in the stock GOP $S^*_t$ and  short in the savings account $S^0_t$. When  hedging  contingent claims, one needs to be able to trade  the num\'eraire that one is using for pricing and hedging. For instance, when hedging a zero-coupon bond that pays one unit of the savings account at maturity, the hedging requires the trading of the num\'eraire and the savings account, as  will be shown later. 
	  \\ A tradeable proxy of the highly leveraged GOP $S^{**}_t$ of the extended market cannot be easily constructed as a  guaranteed strictly positive, self-financing portfolio because such a highly leveraged portfolio  can only be traded at discrete  times and, therefore,  faces  the possibility of becoming negative. \\  
	 To avoid the above difficulties, the paper suggests  employing the 
	  stock GOP $S^*_t$ as a num\'eraire.   As shown in \citeN{PlatenRe20}, a well-diversified total return stock index can be a reasonable  proxy for the stock GOP and is, by construction,  strictly positive.   Total return stock indexes have been  used traditionally as benchmarks in fund management. The current paper suggests  employing such a benchmark  as a num\'eraire for pricing and hedging. It calls the new pricing method  {\em benchmark-neutral  pricing} (BN pricing) and the proxy for the stock GOP the {\em benchmark}. Intuitively, in the denomination of the benchmark and under the respective pricing measure the expected returns of portfolios are zero and, in this sense, \textquoteleft neutral' to the  randomness that  drives the market.\\ The stock GOP has, in the long run, a trajectory that is almost surely pathwise outperforming any other strictly positive stock portfolio; see Theorem 10.5.1 in \citeN{PlatenHe06}. Intuitively, BN pricing  centers the risk management   around the  long-run best-performing strictly positive stock portfolio, whereas risk-neutral pricing centers it around the rather poorly performing savings account.\\
	  
 \noindent By application of the It\^{o} formula it follows that the stock GOP $S^*_t$, when denominated  in units of the GOP $S^{**}_t$, satisfies a driftless SDE and is, therefore, a  $(P, \mathcal{\underline F})$-local martingale. We make throughout the paper the following assumption, which we verify later  for  the realistic stock GOP  model that we will employ:
 \begin{assumption}\label{RNDMARTINGALE}
 	The stock GOP $S^*_t$, when denominated  in units of the GOP $S^{**}_t$, forms the  $(P, \mathcal{\underline F})$-martingale $\frac{S^*}{S^{**}}=\{\frac{ S^*_t}{S^{**}_t},t\in[t_0,\infty)\}$.
 \end{assumption} 	Under this assumption, the proposed change of num\'eraire permits the application of the Change of Num\'eraire Theorem by \citeN{GemanElRo95}. This leads to the  Radon-Nikodym derivative \begin{equation}\label{RND}
	 \Lambda_{S^*}(t)=\frac{dQ_{S^*}}{dP}|_{\mathcal{F}_t}=\frac{\frac{ S^*_t}{S^{**}_t}}{\frac{ S^*_{t_0}}{S^{**}_{t_0}}},\end{equation} which characterizes for the num\'eraire $S^*_t$ the respective {\em benchmark-neutral pricing measure} $Q_{S^*}$. By application of the It\^{o} formula we obtain from \eqref{dS*T} and \eqref{e.4.2'} the SDE
	 \begin{equation}
	 \frac{d\Lambda_{S^*}(t)}{\Lambda_{S^*}(t)}
	 =-\sigma^{S^*}(t)dW_t
	 \end{equation}
	 	with
	 	\begin{equation}\label{sigma10}
	 	\sigma^{S^*}(t)=\frac{\lambda^*_t}{\theta_t}
	 	\end{equation}
	 	for $t\in[t_0,\infty)$.  As a self-financing portfolio that is denominated in units of the GOP $S^{**}_t$ of the extended market, the Radon-Nikodym derivative $\Lambda_{S^*}(t)$ forms in any case a $(P, \mathcal{\underline F})$-local martingale. Under Assumption \ref{RNDMARTINGALE}  it is assumed to be a true martingale.  Let  ${\bf E}^{Q_{S^*}}(.|\mathcal{F}_t)$ denote  the conditional expectation  with respect to the benchmark-neutral pricing measure $Q_{S^*}$ under the information available at time $t\in[t_0,\infty)$. We obtain  directly from \citeN{GemanElRo95} and Girsanov's Theorem, given in, e.g., \citeN{KaratzasSh88} and \citeN{KaratzasSh98}, the following result:

\begin{theorem}[Benchmark-Neutral Pricing Formula]
	 Under the Assumption \ref{RNDMARTINGALE}, the benchmark-neutral pricing measure $Q_{S^*}$ is an equivalent probability measure, and the fair price $H_t$, which the real-world pricing formula identifies for a  contingent claim  $ H_T$, is  obtained via the {\em benchmark-neutral pricing formula} 
		\BE \label{BPF}H_t =S^*_t {\bf E}^{Q_{S^*}}(\frac{H_{T}}{S^*_T}|\mathcal{F}_t) \EE
	 for $t\in[t_0,T]$.   The process $W^{0}=\{W^{0}_t, t\in[t_0,\infty)\}$,
	 satisfying the SDE
	 \begin{equation}\label{W0}
	 dW^{0}_t=\sigma^{S^*}(t)dt+dW_t
	 \end{equation}
 for $t\in[t_0,\infty)$ with $W^{0}_{t_0}=0$, is under  $Q_{S^*}$ a Brownian motion with respect to calendar time, and the  primary security accounts $\tilde S^0_t,...,\tilde S^d_t$,  denominated in the stock GOP $S^*_t$, represent $(Q_{S^*}, \mathcal{\underline F})$-local martingales.
\end{theorem}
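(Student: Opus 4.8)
The plan is to obtain the statement as a short corollary of the two results the theorem already cites --- the Change of Num\'eraire Theorem of \citeN{GemanElRo95} and Girsanov's Theorem --- with Assumption \ref{RNDMARTINGALE} supplying exactly the martingale property needed to invoke them. First I would record that, by its construction in \eqref{RND}, the process $\Lambda_{S^*}$ is strictly positive (because $S^*$ and $S^{**}$ are strictly positive) with $\Lambda_{S^*}(t_0)=1$, and that Assumption \ref{RNDMARTINGALE} makes it a genuine $(P,\underline{\cal F})$-martingale. Since the maturity $T$ is a bounded stopping time, optional sampling gives ${\bf E}^{P}(\Lambda_{S^*}(T))=1$, so $dQ_{S^*}:=\Lambda_{S^*}(T)\,dP$ defines a probability measure on $\mathcal{F}_T$ that is equivalent to $P$ there, with density process $\{\Lambda_{S^*}(t)\}_{t\in[t_0,T]}$.

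Next I would derive the pricing formula \eqref{BPF} by feeding this density into the abstract Bayes rule and then collapsing the result back to the real-world price. For the $\mathcal{F}_T$-measurable random variable $X=H_T/S^*_T$, the Bayes rule gives ${\bf E}^{Q_{S^*}}(X\,|\,\mathcal{F}_t)=\Lambda_{S^*}(t)^{-1}\,{\bf E}^{P}(X\,\Lambda_{S^*}(T)\,|\,\mathcal{F}_t)$ for $t\in[t_0,T]$. Multiplying by $S^*_t$ and substituting $\Lambda_{S^*}(T)/\Lambda_{S^*}(t)=(S^*_T/S^{**}_T)\,(S^*_t/S^{**}_t)^{-1}$ lets all factors of $S^*$ and of $S^*_{t_0}/S^{**}_{t_0}$ cancel, leaving $S^{**}_t\,{\bf E}^{P}(H_T/S^{**}_T\,|\,\mathcal{F}_t)$, which is precisely the fair price $H_t$ produced by the real-world pricing formula. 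The same identity also shows that ${\bf E}^{Q_{S^*}}(|X|)$ is a finite constant multiple of ${\bf E}^{P}(H_T/S^{**}_T)$, which is finite by Definition \ref{def2.1}; hence the conditional expectation in \eqref{BPF} is well defined.

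For the distributional claims I would invoke Girsanov's Theorem. Since $\Lambda_{S^*}$ solves $d\Lambda_{S^*}(t)=-\Lambda_{S^*}(t)\sigma^{S^*}(t)\,dW_t$ and is, by Assumption \ref{RNDMARTINGALE}, a true $(P,\underline{\cal F})$-martingale, Girsanov's Theorem (as in \citeN{KaratzasSh88}, \citeN{KaratzasSh98}) yields that $W^{0}_t:=W_t+\int_{t_0}^t\sigma^{S^*}(s)\,ds$ is a continuous $(Q_{S^*},\underline{\cal F})$-local martingale with $\langle W^{0}\rangle_t=t-t_0$, hence, by L\'evy's characterization, a $Q_{S^*}$-Brownian motion with respect to calendar time; this is \eqref{W0}. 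For the primary security accounts I would use that, for each $j\in\{0,1,\dots,d\}$, the product $\tilde S^j_t\,\Lambda_{S^*}(t)$ is a constant multiple of $S^j_t/S^{**}_t$, which is a $(P,\underline{\cal F})$-local martingale by the benchmark property of the extended market (every primary security account of that market, denominated in its GOP $S^{**}$, is a $(P,\underline{\cal F})$-local martingale; cf.\ equation (10.3.2) in \citeN{PlatenHe06}, applied now to the extended market). The standard criterion that a process is a $(Q_{S^*},\underline{\cal F})$-local martingale exactly when its product with the density process $\Lambda_{S^*}$ is a $(P,\underline{\cal F})$-local martingale then shows that each $\tilde S^j$, $j\in\{0,\dots,d\}$, is a $(Q_{S^*},\underline{\cal F})$-local martingale.

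I expect the only real care points to be bookkeeping rather than conceptual ones: handling the bounded stopping time $T$ in place of a deterministic horizon when applying optional sampling and the Bayes rule, and resisting the (incorrect) shortcut of claiming that ``$P$-local martingale'' transfers directly to ``$Q_{S^*}$-local martingale'' for the $\tilde S^j$ --- it is the product-with-density criterion together with the extended-market benchmark property that legitimises that last step, since the $\tilde S^j$ need not be driven by $W$ alone. Everything else is an immediate consequence of the two cited theorems, which is why the result can be read off ``directly'' from them.
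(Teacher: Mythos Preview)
Your proposal is correct and follows exactly the route the paper indicates: the paper does not give a detailed proof but states that the theorem follows ``directly from \citeN{GemanElRo95} and Girsanov's Theorem,'' and your argument is precisely the standard unpacking of those two results under Assumption \ref{RNDMARTINGALE}. Your additional care points --- checking $Q_{S^*}$-integrability of $H_T/S^*_T$ via Definition \ref{def2.1}, and obtaining the $(Q_{S^*},\underline{\cal F})$-local martingale property of each $\tilde S^j$ through the product-with-density criterion and the extended-market benchmark property rather than an illegitimate direct transfer --- are appropriate refinements of details the paper leaves implicit.
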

This result is of  practical importance because it allows one to use the stock GOP  as a num\'eraire for  pricing and hedging under the benchmark-neutral pricing measure $Q_{S^*}$. Since  the process $W^{0}$ can be interpreted as a Brownian motion under this measure,  the respective SDE for the stock GOP $S^*_t$ takes by \eqref{dS*T}, \eqref{sigma10}, and \eqref{W0} the form
	\BE \label{e.4}
	dS^{*}_t=S^{*}_t
	\theta_t  (  \theta_t dt+d W^{0}_t) \EE
	for $t\in[t_0,\infty)$ with $S^*_{t_0}>0$. The stock GOP-denominated savings account $\tilde S^0_t=\frac{S^0_t}{S^*_t}=\frac{1}{S^*_t}$ satisfies, by application of the It\^{o} formula, the SDE
		\BE \label{e.4''}
		d\tilde S^{0}_t=-\tilde S^{0}_t
		\theta_t  d W^{0}_t \EE
		for $t\in[t_0,\infty)$ with $\tilde S^0_{t_0}>0$, which confirms that $\tilde S^0_t$ forms a $(Q_{S^*}, \mathcal{\underline F})$-local martingale.\\
		 BN pricing is capturing the dynamics under $Q_{S^*}$  as if under $P$ the net risk-adjusted return $\lambda^*_t$ were zero. Consequently, there is no need  to model or estimate the risk-adjusted return, which simplifies significantly the modeling and pricing.  Only the  real-world Brownian motion $W$, which models the nondiversifiable randomness, attracts in the denomination of the savings account a risk premium. The other randomness in the market evolves  under $Q_{S^*}$ as it does under the real-world probability measure $P$. For this reason we focus below in our illustration on a long-term zero-coupon bond that has as contingent claim  a function of the stock GOP. 
\subsection{Portfolios}
The market participants can combine primary security accounts to form portfolios.
Denote by $\delta = \{\delta_t = (\delta^0_t,\delta^1_t,...,\delta^d_t)^\top, t\in[t_0,\infty)\}$
 the strategy, where $\delta^j_t$, $j\in\{0,1,...,d\}$, represents the number of units of the $j$-th primary security account that are held at time $t\in[t_0,\infty)$ in a
corresponding portfolio $S^\delta_t$. When denominated in units of the stock GOP $S^{*}_t$, this  portfolio is captured
by the   process $\tilde S^\delta = \{\tilde S^\delta_t=\frac{S^\delta_t}{S^*_t},t\in[t_0,\infty)\}$,
 where \begin{equation}
 \tilde S^\delta_t=(\delta_t)^\top{\bf{\tilde S}}_t
 \end{equation}
for $t\in[t_0,\infty)$ with  ${\bf{\tilde S}}_t=(\tilde S^0_t,...,\tilde S^d_t)^\top$.
If changes in the value of a portfolio are only due to changes in the values of the
primary security accounts, then no extra funds flow in or out of the portfolio, and
the corresponding portfolio and strategy are called {\em self-financing}. The self-financing property of a  portfolio is expressed by the equation
\begin{equation}\label{stochint}
\tilde S^\delta_t=\tilde S^\delta_{t_0}+\int_{t_0}^{t}(\delta_s)^\top d{\bf{\tilde S}}_s
\end{equation}
for $t\in[t_0,\infty)$ with $ \tilde S^\delta_{t_0}=(\delta_{t_0})^\top{\bf{\tilde S}}_{t_0}$, where the stochastic integral in \eqref{stochint} is assumed to be a vector-It\^{o} integral; see \citeN{ShiryaevCh02}. \\
 To
 introduce a class of admissible strategies for forming
  portfolios, denote by $[{\bf{\tilde S_.}}]_t=([\tilde S^i_.,\tilde S^j_.]_t)_{i,j=0}^{d} $ the matrix-valued optional covariation  of the vector  of stock GOP-denominated primary security
accounts ${\bf{\tilde S}}_t$ for  $t\in[t_0,\infty)$.

 \begin{definition}\label{integr'}
 	An {\em admissible self-financing strategy} ${\bf {\delta}}=\{\delta_t=(\delta^0_t,...,\delta^d_t)^\top,\\t\in[t_0,\infty)\}$, initiated at the time $ t_0$, is an ${\bf{R^{d+1}}}$-valued, predictable
 	stochastic process, satisfying the condition
 	\begin{equation}\label{integrability'}
 	\int_{t_0}^{t}\delta^\top_u[{\bf{\tilde S_.}}]_u \delta_u du<\infty
 	\end{equation} 
 for  $t\in[t_0,\infty)$.
 	\end{definition}
An admissible self-financing strategy generates 
the stock GOP-denominated gains from trade
\begin{equation}\label{stochint'}
\int_{t_0}^{t}\delta_s^\top d{\bf{\tilde S_s}}=\int_{t_0}^{t}d \tilde S^\delta_s=\tilde S^\delta_t-\tilde S^\delta_{t_0}
\end{equation}
for $t\in[t_0,\infty)$. 
It does this without requiring outside funds or generating
extra funds. The predictability of the integrand in the above stock GOP-denominated gains from trade  expresses
the real informational constraint that the allocation of units of primary security accounts in the admissible self-financing strategy $\delta$ is not allowed to
anticipate the movements of the stock GOP-denominated primary security account vector ${\bf{\tilde S_t}}$.

\subsection{Contingent Claims}
In the following, we consider contingent claims that can be  replicated by using self-financing portfolios under BN pricing. Let for a bounded stopping time $T$ the set $ \mathcal{L}^1_{Q_{S^*}}(\mathcal{F}_T)$ denote the set of $\mathcal{F}_T$-measurable and $Q_{S^*}$-integrable random variables in the filtered probability space  $(\Omega,\mathcal{F},\underline{\cal{F}},Q_{S^*})$.\\

\begin{definition}\label{HT}
	We call for a bounded stopping time $T\in[t_0,\infty)$ a stock GOP-denominated contingent claim  $\tilde H_T=\frac{H_T}{S^*_T}\in \mathcal{L}^1_{Q_{S^*}}(\mathcal{F}_T)$ {\em BN-replicable}   if it
	has for all $t\in[t_0,T]$ a representation of the  form
	\begin{equation}\label{tildeHT}
	\tilde H_T={\bf E}^{Q_{S^*}}(\tilde H_{T}|\mathcal{F}_t)+\int_{t}^{T}\delta_{\tilde H_T}(s)^\top d{\bf{\tilde S_s}}
	\end{equation}
$Q_{S^*}$-almost surely, involving some predictable vector process $\delta_{\tilde H_T}=\{\delta_{\tilde H_T}(t)=(\delta_{\tilde H_T}^0(t),...,
\delta_{\tilde H_T}^d(t))^\top,t\in[t_0,T]\}$  satisfying the condition \eqref{integrability'}.
\end{definition}
To capture the replication of a targeted contingent claim we introduce the following notion:
\begin{definition} \label{deliver}
	We say, an admissible self-financing  strategy $\delta=\{\delta_t=(\delta^0_t,...,\\ \delta_t^d)^\top,t\in[t_0,T]\}$  {\em delivers} the stock GOP-denominated BN-replicable contingent claim $\tilde H_T$ at a bounded stopping time $T$ if  the equality
	\begin{equation}
	\tilde S^\delta_T=\tilde H_T
	\end{equation}
holds $Q_{S^*}$-almost surely.
\end{definition}
Combining  Definition \ref{integr'}, Definition \ref{HT}, Definition \ref{deliver}, and the SDE \eqref{stochint},   leads  directly to the following statement:
\begin{corollary}
	For a BN-replicable stock GOP-denominated contingent claim $\tilde H_T$ with representation \eqref{tildeHT}, there exists an admissible self-financing strategy $\delta_{\tilde H_T}=\{\delta_{\tilde H_T}(t)=(\delta_{\tilde H_T}^0(t),...,
	\delta_{\tilde H_T}^d(t))^\top,t\in[t_0,T]\}$ with corresponding stock GOP-denominated price process $\tilde S^{\delta_{\tilde H_T}}_t=\tilde H_t$ given by the {\em benchmark-neutral pricing formula}
	\begin{equation}
	\tilde H_t={\bf E}^{Q_{S^*}}(\tilde H_{T}|\mathcal{F}_t),
	\end{equation} which
	delivers the stock GOP-denominated contingent claim	\begin{equation}
\tilde H_T=\tilde S^\delta_T
	\end{equation}
	$Q_{S^*}$-almost surely.
\end{corollary}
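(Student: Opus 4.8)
The plan is to read off the hedging strategy directly from the martingale‑representation‑type identity \eqref{tildeHT} that defines BN‑replicability, and then to verify that the three structural requirements — admissibility, self‑financing, and delivery — are each an immediate consequence of the definitions already in place.

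First I would take $\delta_{\tilde H_T}$ to be exactly the predictable vector process furnished by Definition \ref{HT}. By hypothesis this process satisfies the integrability condition \eqref{integrability'}, so by Definition \ref{integr'} it is an admissible self‑financing strategy initiated at $t_0$. Next I would define the candidate price process by the benchmark‑neutral pricing formula $\tilde H_t={\bf E}^{Q_{S^*}}(\tilde H_{T}|\mathcal{F}_t)$ for $t\in[t_0,T]$; in particular $\tilde H_{t_0}={\bf E}^{Q_{S^*}}(\tilde H_{T}|\mathcal{F}_{t_0})$ plays the role of the initial capital.

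The key step is to identify $\tilde H_t$ with the value $\tilde S^{\delta_{\tilde H_T}}_t$ of the portfolio obtained by running the strategy $\delta_{\tilde H_T}$ from the initial value $\tilde H_{t_0}$. For this I would evaluate the representation \eqref{tildeHT} at the generic time $t$ and at the initial time $t_0$; subtracting the two identities, the stochastic integrals over $[t,T]$ and $[t_0,T]$ combine to leave $-\int_{t_0}^{t}\delta_{\tilde H_T}(s)^\top d{\bf{\tilde S_s}}$, and one obtains $\tilde H_t=\tilde H_{t_0}+\int_{t_0}^{t}\delta_{\tilde H_T}(s)^\top d{\bf{\tilde S_s}}$ $Q_{S^*}$‑almost surely for every $t\in[t_0,T]$. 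This is precisely the self‑financing equation \eqref{stochint} with $\tilde S^\delta_{t_0}=\tilde H_{t_0}$, so $\tilde S^{\delta_{\tilde H_T}}_t=\tilde H_t$ throughout $[t_0,T]$. Evaluating at $t=T$ gives $\tilde S^{\delta_{\tilde H_T}}_T=\tilde H_T$ $Q_{S^*}$‑almost surely, which is the delivery property of Definition \ref{deliver}.

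I do not expect a genuine obstacle, since every ingredient is built into the definitions; the only point requiring a little care is the subtraction step, where one must use that the \emph{same} predictable process $\delta_{\tilde H_T}$ appears in \eqref{tildeHT} simultaneously for all $t$ — in particular for $t$ and for $t_0$ — so that the difference of the two integral representations really is the single vector‑It\^{o} integral $\int_{t_0}^{t}\delta_{\tilde H_T}(s)^\top d{\bf{\tilde S_s}}$ rather than two unrelated integrals. One should also observe in passing that admissibility of $\delta_{\tilde H_T}$ ensures this integral is well defined in the sense of \eqref{stochint}, so the identification with the portfolio process $\tilde S^{\delta_{\tilde H_T}}_t$ is legitimate.
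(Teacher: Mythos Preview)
Your argument is correct and is exactly the unpacking the paper has in mind: the corollary is stated as following directly from combining Definition~\ref{integr'}, Definition~\ref{HT}, Definition~\ref{deliver}, and the self-financing SDE~\eqref{stochint}, and your subtraction step is precisely how these pieces fit together. There is nothing to add.
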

 The stock GOP-denominated price 	$\tilde H_t$
	at time $t\in[t_0,T]$ yields, within the set  of admissible self-financing strategies, the minimal possible self-financing portfolio process that delivers the stock GOP-denominated contingent claim $\tilde H_T$.\\

\subsection{Hedging Strategy}
Recall that the $j$-th ˆ
stock GOP-denominated primary security account process $\tilde S^j_t$,  $j\in\{0,1,...,d\}$,  is a $(Q_{S^*}, \mathcal{\underline F})$-local martingale under the benchmark-neutral pricing measure $Q_{S^*}$. Consequently,
a stock GOP-denominated self-financing portfolio $\tilde S^\delta_t$ is  a $(Q_{S^*}, \mathcal{\underline F})$-local martingale. \\
Consider a stock GOP-denominated BN-replicable contingent claim $\tilde H_T$ with a bounded stopping time $T$ as
maturity, where  its entire randomness is driven by the $(Q_{S^*}, \mathcal{\underline F})$-local martingales $W^0,W^1,...,W^{d-1}$.
These local martingales are assumed to be $Q_{S^*}$-orthogonal to each other 
in the sense that their pairwise products form $(Q_{S^*}, \mathcal{\underline F})$-local martingales. Furthermore, each
stock GOP-denominated primary security account value $\tilde S^j_t$,  $j\in\{0,...,d\}$, is assumed to satisfy an SDE  of the form
\begin{equation}
\frac{d \tilde S^j_t}{\tilde S^j_t}=-\sum_{k=0}^{d-1}\theta^{j,k}_tdW^k_t
\end{equation}
for $t\in[t_0,\infty)$ with $\tilde S^j_{t_0}>0$. We assume that $\theta^{j,k}=\{\theta^{j,k}_t,t\in[t_0,\infty)\}$  forms for each $j\in\{0,1,...,d\}$ and $k\in\{0,1,...,d-1\}$
a predictable process such that the above stochastic differentials are well defined, see \citeN{KaratzasSh98}. Note that $W^0$ is the $(Q_{S^*}, \mathcal{\underline F})$ Brownian motion introduced in \eqref{W0}, and by \eqref{e.4''} we have  $\theta^{0,0}_t=\theta^0_t$ and $\theta^{0,k}_t=0$ for $k\in\{1,...,d-1\}$ and $t\in[t_0,\infty)$. \\
 For  $t\in[t_0,\infty)$ we denote by  $\Phi_t=[\Phi^{j,k}_t]_{j,k=0}^{d,d}$ the matrix with elements
\begin{equation}
\Phi^{j,k}_t=\theta^{j,k}_t
\end{equation}
for $j\in\{0,1,...,d\}$ and $k\in\{0,1,...,d-1\}$, and
\begin{equation}
\Phi^{j,d}_t=1
\end{equation}
for $j\in\{0,1,...,d\}$. \\Let us make the following assumption:
\begin{assumption} We assume that a BN-replicable stock GOP-denominated contingent claim $\tilde H_T$ has for its fair stock GOP-denominated price at time $t\in[t_0,T]$ under $Q_{S^*}$ a unique martingale representation  of the form
\begin{equation}\label{tildeHT''}
\tilde H_t=\tilde H_{t_0}+\sum_{k=0}^{d-1}\int_{t_0}^{t}x^k_s d{W^k_s},
\end{equation}
where $x^0,x^1,...,x^{d-1}$ are predictable and the integrals
\begin{equation}
\int_{t_0}^{t}(x^k_s)^2ds<\infty
\end{equation} are $Q_{S^*}$-almost surely finite for every $k\in\{0,1,...,d-1\}$ and $t\in[t_0,\infty)$. 
\end{assumption}  As in the proof of Proposition 7.1 in \citeN{DuPl16}, it follows:
\begin{theorem}  If the matrix $\Phi_t$ is Lebesgue-almost everywhere invertible, then the  strategy $\delta_{\tilde H_T}$ is given by the vector
\begin{equation}
\delta_{\tilde H_T}(t)=diag({\bf{\tilde S}}_t)^{-1}(\Phi_t^\top)^{-1}\xi_t
\end{equation}
with
\begin{equation}
\xi_t=(-x^0_t,-x^1_t,...,-x^{d-1}_t,\tilde H_t)^\top
\end{equation}
for all $t\in[t_0,T)$.
\end{theorem}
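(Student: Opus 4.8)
\noindent\textit{Proof sketch.} The plan is a direct verification, following the scheme of the proof of Proposition~7.1 in \citeN{DuPl16}: one posits the strategy defined by the stated formula and checks that it is admissible, self-financing, and delivers $\tilde H_T$. The useful reformulation is to pass from the unit holdings $\delta_{\tilde H_T}(t)$ to the stock GOP-denominated holdings $\phi_t:=diag({\bf{\tilde S}}_t)\,\delta_{\tilde H_T}(t)$, with components $\phi^j_t=\delta^j_{\tilde H_T}(t)\,\tilde S^j_t$, so that the claimed formula is equivalent to the $(d{+}1)$-dimensional linear system $\Phi_t^\top\phi_t=\xi_t$. The heart of the proof is to recognise that this single system encodes both requirements on a delivering strategy: its rows $k=0,\dots,d-1$ (those carrying the entries $\theta^{j,k}_t$) match the $dW^k$-coefficients of the gains from trade with those of the martingale representation \eqref{tildeHT''}, while its last row (coming from the column of ones in $\Phi_t$) forces the portfolio value to equal the price $\tilde H_t$.

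First I would read off the portfolio value. Since $\tilde S^{\delta_{\tilde H_T}}_t=(\delta_{\tilde H_T}(t))^\top{\bf{\tilde S}}_t=\sum_{j=0}^d\phi^j_t=(\Phi_t^\top\phi_t)_d$, the last entry of $\xi_t$ gives $\tilde S^{\delta_{\tilde H_T}}_t=\tilde H_t$ for every $t\in[t_0,T]$. Next, using the assumed dynamics $d\tilde S^j_t=-\tilde S^j_t\sum_{k=0}^{d-1}\theta^{j,k}_t\,dW^k_t$, I would evaluate the gains from trade
\begin{equation*}
\int_{t_0}^t(\delta_{\tilde H_T}(s))^\top d{\bf{\tilde S}}_s=-\int_{t_0}^t\sum_{k=0}^{d-1}\Big(\sum_{j=0}^d\phi^j_s\,\theta^{j,k}_s\Big)dW^k_s=-\int_{t_0}^t\sum_{k=0}^{d-1}(\Phi_s^\top\phi_s)_k\,dW^k_s=\sum_{k=0}^{d-1}\int_{t_0}^t x^k_s\,dW^k_s,
\end{equation*}
where the last step uses $(\Phi_s^\top\phi_s)_k=-x^k_s$ for $k\le d-1$ and the stochastic integrals are well defined by the admissibility check below. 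By \eqref{tildeHT''} the right-hand side equals $\tilde H_t-\tilde H_{t_0}$, so together with the portfolio-value identity one obtains $\tilde S^{\delta_{\tilde H_T}}_t=\tilde H_t=\tilde S^{\delta_{\tilde H_T}}_{t_0}+\int_{t_0}^t(\delta_{\tilde H_T}(s))^\top d{\bf{\tilde S}}_s$, i.e.\ the self-financing identity \eqref{stochint}; at $t=T$ this shows the strategy delivers $\tilde H_T$ in the sense of Definition~\ref{deliver}. By uniqueness of the martingale representation and invertibility of $\Phi_t$ it is the strategy $\delta_{\tilde H_T}$ of the preceding Corollary.

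It then remains to confirm admissibility and predictability. By $Q_{S^*}$-orthogonality of the $W^k$ one has $d[\tilde S^i_.,\tilde S^j_.]_t=\tilde S^i_t\tilde S^j_t\sum_{k=0}^{d-1}\theta^{i,k}_t\theta^{j,k}_t\,dt$, so the left-hand side of \eqref{integrability'} equals $\int_{t_0}^t\sum_{i,j}\phi^i_s\phi^j_s\sum_k\theta^{i,k}_s\theta^{j,k}_s\,ds=\int_{t_0}^t\sum_{k=0}^{d-1}\big((\Phi_s^\top\phi_s)_k\big)^2\,ds=\int_{t_0}^t\sum_{k=0}^{d-1}(x^k_s)^2\,ds$, which is $Q_{S^*}$-almost surely finite by the Assumption. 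Predictability of $\delta_{\tilde H_T}$ holds because $\xi_t$ is assembled from the predictable processes $x^k_t$ and the continuous (hence predictable) price $\tilde H_t$, $\Phi_t$ from the predictable $\theta^{j,k}_t$ and constants, and matrix inversion is Borel; on the Lebesgue-null set where $\Phi_t$ is singular one may set $\delta_{\tilde H_T}(t)=0$ without changing any of these integrals.

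I do not expect a genuine obstacle, since the whole argument is a verification. The points that need care are getting the orientation of the linear system right --- it is $\Phi_t^\top$, not $\Phi_t$, that acts on the holdings, and the sign in $\xi_t$ stems from the minus sign in the SDEs of the $\tilde S^j_t$ --- and the measure-theoretic bookkeeping around the Lebesgue-almost-everywhere invertibility of $\Phi_t$, which is precisely what yields the predictability of $\delta_{\tilde H_T}$ and the integrability condition \eqref{integrability'}.
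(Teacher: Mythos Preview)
Your proposal is correct and follows exactly the route the paper itself invokes: the paper does not spell out a proof but simply appeals to the argument of Proposition~7.1 in \citeN{DuPl16}, and your verification --- rewriting the strategy via $\phi_t=diag({\bf{\tilde S}}_t)\delta_{\tilde H_T}(t)$, reading the system $\Phi_t^\top\phi_t=\xi_t$ as simultaneously encoding the value constraint (last row) and the matching of $dW^k$-coefficients (first $d$ rows), and then checking self-financing, delivery, admissibility, and predictability --- is precisely that argument carried out in the present notation. The sign and transpose bookkeeping you flag is handled correctly.
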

Here $diag({\bf{ S}})$ denotes the diagonal matrix with the elements of a vector ${\bf{ S}}$ as its diagonal. In the case when the  dynamics of the extended market are modeled using state variables that satisfy the SDEs of a Markovian system of  diffusions, one can systematically identify for a given stock GOP-denominated BN-replicable contingent claim $\tilde H_T$ the respective representation  \eqref{tildeHT}. The price  $\tilde H_t$ at the time $t$ can be obtained,  e.g., by
explicit calculation of the conditional expectation, by application of the Feynmnan-Kac formula, or via some numerical method.  The price results as a function of the state variables that satisfies a respective partial differential equation (PDE). \\  The integrands in the representation \eqref{tildeHT''} and  the predictable vector $\xi_t$ emerge  when applying the It\^{o} formula to the price function and matching the respective terms in the martingale part of the resulting SDE. The PDE operator follows by setting the drift part in the resulting SDE for the price function to zero. 
Consequently, the price function satisfies 
a  Kolmogorov-backward PDE.\\ The boundary conditions of the PDE need  to be specified such that the solution of the PDE, as a function of the evolving state variables, becomes a $(Q_{S^*}, \mathcal{\underline F})$-martingale.  When only fixing the PDE operator and the boundary conditions that are determined by the payoff structure  of the contingent claim, there may exist several  price functions that solve the PDE. All these solutions yield nonnegative $(Q_{S^*}, \mathcal{\underline F})$-local martingales. Since these processes form $(Q_{S^*}, \mathcal{\underline F})$-supermartingales that deliver the targeted contingent claim, they yield  price processes that are larger than or equal to the one that forms the {\em fair} price process, which is the $(Q_{S^*}, \mathcal{\underline F})$-martingale.   We emphasize, it is the fair price process that delivers at time $T$ the stock GOP-denominated contingent claim $\tilde H_T$ by starting from the most economical minimal possible stock GOP-denominated initial  price $\tilde H_{t_0}$.\\

\section{Stock GOP Dynamics}\setcA

\subsection{Leverage Effect and Equivalent BN Pricing Measure}
The question arises  whether  Assumption \ref{RNDMARTINGALE}, which provides the martingale property of the Radon-Nikodym derivative of the BN pricing measure, is  realistic. This means whether it is realistic for  existing stock markets to model the Radon-Nikodym derivative $\Lambda_{S^*}(t)$ of the putative BN pricing measure $Q_{S^*}$ as a true  $(P, \mathcal{\underline F})$-martingale. The answer to this question is closely related to the boundary behavior of the   volatility  $\sigma^{S^*}(t)= \frac{\lambda^*_t}{\theta_t}$, see \eqref{sigma10}, of the Radon-Nikodym derivative $\Lambda_{S^*}(t)$; see, e.g.,  \citeN{AndersenPi07}, \citeN{HulleyPl12}, and  \citeN{HulleyRu19}. \\Intuitively,  the  num\'eraire $S^*_t$, when denominated in the GOP $S^{**}_t$ of the extended market,  is a martingale when its  volatility  $\sigma^{S^*}(t)$   remains finite for finite values of  $S^*_t$, including its asymptotic  value at the boundary where it approaches zero. 
 The stock GOP $S^*_t$ can be approximated by a well-diversified stock index; see, e.g., \citeN{PlatenRe20}.   Furthermore, for a  stock index it is well-known that it exhibits the {\em leverage effect}; see \citeN{Black76b} and \citeN{AitSahalia12}. This effect captures in an idealized manner  the  fact that the volatility  of a stock index becomes asymptotically infinite when the stock index value converges to zero.  The net risk-adjusted return $\lambda^*_t$  models an  average of the expected returns of the savings account-denominated stocks minus the squared volatility of the stock GOP, see, e.g., Theorem 3.1 in \citeN{FilipovicPl09}. When the value of the stock index converges to zero, which is the extreme of a stock market crash, the  expected returns of the savings account-denominated stocks, which cause the crash, seem unlikely to converge to infinity and  likely to remain finite for economic reasons.  Therefore, it seems unlikely that the net risk-adjusted return becomes infinite when the stock GOP value converges to zero and its squared volatility converges, due to the leverage effect, to infinity. Consequently, the volatility $\sigma^{S^*}(t)=\frac{\lambda^*_t}{\theta_t}$, see \eqref{sigma10}, of the Radon-Nikodym derivative of the BN pricing measure    remains most likely finite when the stock GOP value converges to zero. Otherwise, the net risk-adjusted return would have to converge to infinity when the stock GOP would converge to zero, which seems unlikely.
   This intuition indicates that  stock GOP dynamics, which exhibit the empirically well-observed leverage effect, are likely to yield an equivalent BN pricing measure. We confirm below that this is  the case    for   the  realistic stock GOP model we will assume.
\subsection{Minimal Market Model in Activity Time}  
 To illustrate BN pricing, the current paper assumes a model where the stock GOP evolves in  some {\em activity time}  $\tau=\{\tau_t, t\in[t_0,\infty)\}$ with {\em activity}
$
a_t\in(0,\infty)
$
for $t\in[t_0,\infty)$ starting with    the  {\em   initial activity time} $\tau_{t_0}$, where
\begin{equation}\label{tau}
\tau_t=\tau_{t_0}+\int_{t_0}^{t}a_sds.
\end{equation} Intuitively, the activity reflects the trading activity, which can be expected to show  seasonal and behavioral effects. 
 The net risk-adjusted return $\lambda^*_t$  is a Lagrange multiplier, see Theorem 3.1 in \citeN{FilipovicPl09}, and does not need to be modeled under BN pricing because it becomes removed from the stock GOP dynamics under $Q_{S^*}$. Therefore, the current paper  makes the simplifying assumption, which could be substituted by  more elaborate assumptions without changing the proposed BN pricing and hedging, that the net risk-adjusted return is   proportional to the activity, which means 
\begin{equation}\label{lambda1}
\lambda^*_t=\bar\lambda^*a_t
\end{equation}  with $\bar\lambda^* >0$ for $t\in[t_0,\infty)$.  Since the  values of the net risk-adjusted return are almost impossible to estimate, the fact that they do not matter when applying BN pricing is of  practical importance and simplifies  the implementation. Another simplification for the implementation of the BN pricing and hedging of long-term contingent claims, like long-term  zero-coupon bonds, will arise from the fact that one will not have to model the random behavior of the activity time. Still, the activity time has to be observed and its average linear evolution  estimated. \\ 
The {\em minimal market model} (MMM) has been suggested by \citeN{Platen01a} as a realistic model for the long-term stock GOP dynamics.   Under the MMM, the  volatility of the stock GOP dynamics in calendar time has the form
\begin{equation}\label{theta2'''}
\theta_{t}=\sqrt{\frac{4e^{ \tau_t}a_t}{S^*_t}}
\end{equation}
for $t\in[t_0,\infty)$.  By applying this model,  the stock GOP satisfies according to \eqref{e.4} under the BN pricing measure the SDE

\BE \label{e.4''}
dS^{*}_t
=4e^{\tau_t}d\tau_t+\sqrt{S^*_t4e^{\tau_t}}d\bar W^0_{\tau_t} \EE
with $S^*_{t_0}>0$ 
 for $t\in[t_0,\infty)$. Here $\bar W^{0}_{ \tau_t}$ represents a  Brownian motion  under $Q_{S^*}$ in activity time with stochastic differential
\begin{equation}\label{barW0}d \bar W^{0}_{\tau_t}
=\sqrt{a_t}d W^0_t
\end{equation} 
for $t\in[t_0,\infty)$.  The SDE \eqref{e.4''} shows that $S^*_t$ represents a time-transformed squared Bessel process of dimension  four; see \citeN{RevuzYo99} or equation (8.7.1) in \citeN{PlatenHe06}. One notes that its volatility in activity time exhibits  a leverage effect and is, as a 3/2 volatility model, a constant elasticity of variance model, see, e.g., \citeN{Cox75}, \citeN{Platen97d}, \citeN{Heston97}, and \citeN{Lewis00}.\\
The MMM diffusion dynamics of a squared Bessel process of dimension four, see \citeN{RevuzYo99}, for the stock GOP seems to capture its \textquoteleft natural' dynamics. Critical is the diffusion coefficient in the SDE \eqref{e.4''}, which is proportional to the square root of the stock GOP value. Such a diffusion coefficient arises in the SDE for the limit of the population size of a birth-and-death process; see \citeN{Feller71}. The current paper suggests that the \textquoteleft natural' evolution of diversified wealth can be interpreted as that of a birth-and-death process where independently wealth units  give  birth to new ones  or die. This interpretation provides a basis for the understanding of the \textquoteleft natural' dynamics of the stock GOP. The trading activity, which we reflect in the derivative of the activity time, provides the \textquoteleft speed' at which the stock GOP dynamics evolve under the MMM. Furthermore, in \citeN{Platen23} it has been shown that the  stock GOP dynamics of the MMM  maximizes the  entropy of the stationary density of the normalized stock GOP, which characterizes its most likely dynamics. These facts provide  intuitive,  mathematically founded arguments for the choice of the  MMM   in activity time as the model for the stock GOP. 
\subsection{Observed Activity Time}
To observe the  activity time,
we consider the square root of the stock GOP $\sqrt{S^*_t}$  and obtain by application of the It\^{o} formula, \eqref{e.4''}, \eqref{barW0} and \eqref{W0} the SDE 
\BE \label{e.5}
d\sqrt{S^{*}_t}=\frac{3e^{\tau_t}}{2\sqrt{S^*_t}}d\tau_t+ \sqrt{e^{\tau_t}} d \bar W^{0}_{\tau_t}
\EE
for $t\in[t_0,\infty)$.
Therefore, the quadratic variation of $\sqrt{S^{*}_t}$ becomes
\begin{equation}
[\sqrt{S^{*}_.}]_{t_0,t}=\int_{\tau_{t_0}}^{\tau_t}e^{s}ds=e^{\tau_t}-e^{\tau_{t_0}}
\end{equation}
and the activity time results in the form \begin{equation}\label{tauest}
\tau_t= 	 \ln\left(
[\sqrt{S^{*}_.}]_{t_0,t}+e^{\tau_{t_0}}\right)
\end{equation}
for $t\in[t_0,\infty)$.\\ 
For illustration, we use the same data as in \citeN{PlatenRe20}, where the market capitalization-weighted total return stock index (MCI) was constructed that is displayed in Figure \ref{FigMCI1}. The MCI was in \citeN{PlatenRe20} from the stock data directly generated as a match of the daily observed US Dollar savings account-denominated MSCI-Total Return Stock Index for the developed markets. As shown in \citeN{PlatenRe20} one can interpret the MCI as a reasonable  proxy for the stock GOP. \\ For some tentative initial activity time, one can observe the trajectory of the resulting respective activity time, which turns out to evolve approximately linearly. Only at the beginning of its trajectory, one notices some deviation from its approximate linearity, which changes with the choice of the tentative value of the  initial activity time. This leads  to the assumption that the average of the activity time is  a linear function of calendar time.  Therefore, one can search by standard linear regression for the initial activity time that yields the activity time that is as close as possible to a straight line.
   The resulting  
  estimated {\em trendline} \begin{equation}\bar \tau_t=\bar \tau_{t_0}+ \bar at
\end{equation}
  is exhibited in Figure \ref{FigMCI2} together with the MCI for the  period from $t_0= $2 January 1984 until $T= $1 November 2014. We observe for the trendline 
  its  {\em slope} $ \bar a\approx 0.053$ and {\em  initial value} $\bar \tau_{t_0}\approx 2.15$.  The $R^2$-value of $0.98$ confirms that the  observed  activity time evolves  approximately linearly. Because of this finding, the current paper   employs first
     the trendline $\bar \tau_t$  as a substitute for the activity time in BN pricing and hedging and later an enhanced way of pricing and hedging. When approximating maturity dates and other times in  activity time, the two parameters $\bar \tau_{t_0}\approx-103.86$ and $ \bar a\approx 0.053$ are assumed to remain the same also in the future. 
\begin{figure}[h!]
	\centering
	\includegraphics[width=14cm, height=6cm]{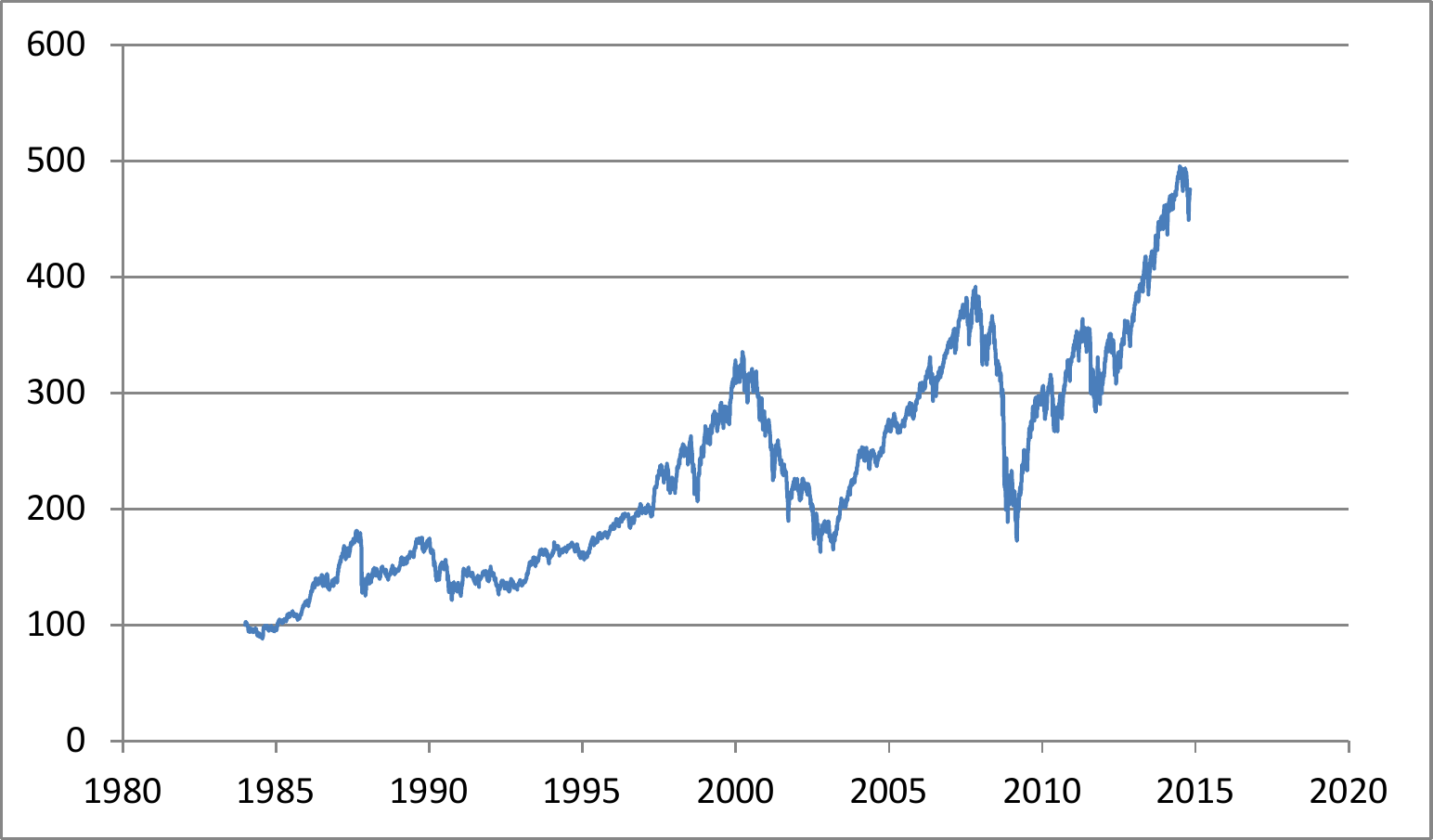}\\
	\caption{US Dollar savings account-denominated MCI.}\label{FigMCI1}
\end{figure} 
\begin{figure}[h!]
	\centering
	\includegraphics[width=14cm, height=6cm]{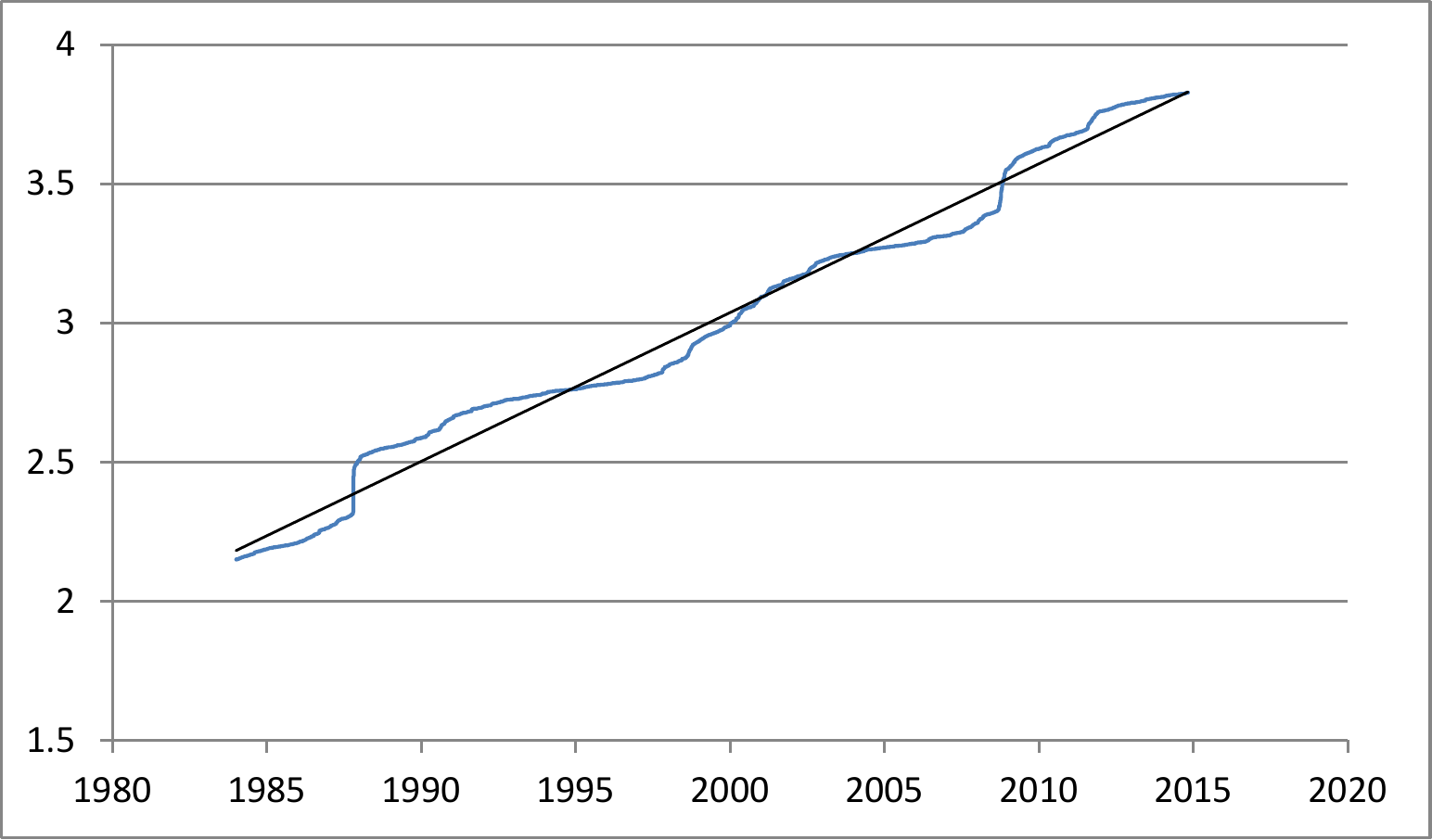}\\
	\caption{Activity time $\tau_t$  and trendline $\bar\tau_t$.}\label{FigMCI2}
\end{figure}

\subsection{BN Pricing Measure}
When using  the  trendline $\bar \tau_t$ with constant slope $\bar a\in(0,\infty)$ as model for the activity time, it follows by \eqref{sigma10} and  \eqref{theta2'''} that the volatility $\sigma^{S^*}(t)$  of the Radon-Nikodym derivative of the BN pricing measure  equals
\begin{equation}\label{sigma2'}
 \sigma^{S^*}(t)=\bar\lambda^*\sqrt{\frac{\bar aS^*_t}{4e^{\bar{\tau_t}}}}
\end{equation}
for $t\in[t_0,\infty)$.  This  allows us to prove  the following result:
\begin{theorem}\label{truemartBP}
	Under the above assumptions and the model that employs the trendline $\bar \tau_t$ as a proxy for the activity time, the Radon-Nikodym derivative of the BN pricing measure $Q_{S^*}$ is  a   $(P, \mathcal{\underline F})$-martingale and $Q_{S^*}$   an equivalent probability measure.
\end{theorem}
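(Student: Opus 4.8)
The plan is to reduce the theorem to the single quantitative statement ${\bf E}^{P}(\Lambda_{S^*}(T))=1$ for every finite $T>t_0$, and then to obtain that statement from the boundary behaviour of the diffusion $S^*$. From the definition \eqref{RND} of $\Lambda_{S^*}$ and the SDE $d\Lambda_{S^*}(t)/\Lambda_{S^*}(t)=-\sigma^{S^*}(t)\,dW_t$ displayed thereafter, $\Lambda_{S^*}$ is the stochastic exponential of $-\int_{t_0}^{\cdot}\sigma^{S^*}(s)\,dW_s$, hence a strictly positive continuous $(P,\mathcal{\underline F})$-local martingale with $\Lambda_{S^*}(t_0)=1$, and therefore a $(P,\mathcal{\underline F})$-supermartingale; since a supermartingale with constant expectation is a martingale, ${\bf E}^{P}(\Lambda_{S^*}(T))=1$ for all $T$ already yields the martingale property. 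Once this is established, on each $\mathcal{F}_T$ the measure $Q_{S^*}$ possesses the strictly positive density $\Lambda_{S^*}(T)$, so $Q_{S^*}\sim P$ there, and the Benchmark-Neutral Pricing Formula then follows from \citeN{GemanElRo95} and Girsanov's theorem exactly as already recorded.

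Next I would make the problem one-dimensional. Under the trendline model one has $a_t\equiv\bar a$ and $\tau_t=\bar\tau_t$, so by \eqref{theta2'''} the product $(\theta_t)^2S^*_t=4e^{\bar\tau_t}\bar a$ is a deterministic, locally bounded function of time, and by \eqref{sigma2'} the coefficient $\sigma^{S^*}(t)=\bar\lambda^*\sqrt{\bar aS^*_t/(4e^{\bar\tau_t})}$ is a continuous function of $(t,S^*_t)$ alone; as the only noise is the scalar Brownian motion $W$, everything is governed by the single diffusion $S^*$. By construction $Q_{S^*}$ is the measure under which the process $W^{0}$ of \eqref{W0} is a Brownian motion, and under it $S^*$ solves, by \eqref{e.4}, \eqref{sigma10} and \eqref{theta2'''}, the SDE $dS^*_t=4e^{\bar\tau_t}\bar a\,dt+\sqrt{4e^{\bar\tau_t}\bar a\,S^*_t}\,dW^{0}_t$, which a deterministic $C^1$ time change turns into a genuine squared Bessel process of dimension four. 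Started from $S^*_{t_0}>0$, such a process is non-explosive and, being of dimension four, never reaches the origin, so on each $[t_0,T]$ it stays $Q_{S^*}$-almost surely inside a random compact subinterval of $(0,\infty)$. The same boundary behaviour holds under $P$: by \eqref{dS*T}, \eqref{lambda1} and \eqref{theta2'''} one has $dS^*_t=\bar\lambda^*\bar aS^*_t\,dt+4e^{\bar\tau_t}\bar a\,dt+\sqrt{4e^{\bar\tau_t}\bar a\,S^*_t}\,dW_t$, which near $0$ is again of squared-Bessel-of-dimension-four type, while the affine drift and the sublinear diffusion coefficient preclude explosion.

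To finish I would localize. Put $\tau_n=\inf\{t\ge t_0:S^*_t\notin(1/n,n)\}$; up to $\tau_n$ the coefficient $\sigma^{S^*}$ is bounded, so $\{\Lambda_{S^*}(t\wedge\tau_n)\}$ is a true martingale and the probability measure $Q^{(n)}$ on $\mathcal{F}_{T\wedge\tau_n}$ with density $\Lambda_{S^*}(T\wedge\tau_n)$ is well defined. By Girsanov, under $Q^{(n)}$ the process $W^{0}$ is a Brownian motion up to $T\wedge\tau_n$ and $S^*$ solves the above squared Bessel SDE up to $T\wedge\tau_n$; pathwise uniqueness for that SDE (the diffusion coefficient being a square root, so that the Yamada-Watanabe condition applies) identifies the $Q^{(n)}$-law of the stopped path $\{S^*_{t\wedge\tau_n}\}$ with that of the time-transformed squared Bessel process of dimension four stopped at its exit from $(1/n,n)$, whence $Q^{(n)}(\tau_n\le T)\to0$ as $n\to\infty$. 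Optional stopping for the stopped martingale gives $1={\bf E}^{P}\big(\Lambda_{S^*}(T)\mathbf{1}_{\{\tau_n>T\}}\big)+Q^{(n)}(\tau_n\le T)$, and since $\tau_n\uparrow\infty$ $P$-almost surely (no explosion under $P$), monotone convergence applied to the first term yields ${\bf E}^{P}(\Lambda_{S^*}(T))=1$, as required.

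The hard part is the penultimate step: converting the heuristic of Section~4.1 — that $\sigma^{S^*}$ stays finite at finite, and at vanishing, values of $S^*$ — into the fact that no mass is lost under $Q_{S^*}$; concretely, identifying the $Q^{(n)}$-law of the stopped diffusion with a stopped squared Bessel process and establishing non-attainment of the boundary, both of which need care because the diffusion coefficient is only a square root and the coefficients are time-inhomogeneous, so that standard Lipschitz theory does not apply and one leans instead on the known martingale criteria for squared-Bessel-type stochastic exponentials. I would also record a short cut available in the economically relevant regime: applying It\^{o}'s formula to $\ln\Lambda_{S^*}$ and using $(\theta_t)^2S^*_t=4e^{\bar\tau_t}\bar a$ together with \eqref{lambda1} gives the closed form \[\Lambda_{S^*}(t)=\exp\Big(c_{t_0}S^*_{t_0}-c_tS^*_t+\bar\lambda^*\bar a\,(t-t_0)-\bar a\big(1-\tfrac{\bar\lambda^*}{2}\big)\int_{t_0}^{t}c_sS^*_s\,ds\Big),\qquad c_t=\frac{\bar\lambda^*}{4e^{\bar\tau_t}}>0;\] when $\bar\lambda^*\le2$ the last two stochastic terms are nonpositive, so $\Lambda_{S^*}$ is bounded on every $[t_0,T]$ and is a true martingale outright, and only the range $\bar\lambda^*>2$ genuinely needs the localization argument above.
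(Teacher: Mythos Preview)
Your proof is correct but follows a genuinely different route from the paper. The paper's argument is much shorter: it observes that under $P$ with the trendline model the process $\bar S^*_t=e^{-\bar\lambda^*\bar a t}S^*_t$ is itself a time-transformed squared Bessel process of dimension four, so that by \eqref{sigma2'} the squared volatility of $\Lambda_{S^*}$ factors as a deterministic function of $t$ times $\bar S^*_t$; it then invokes Lemma~2.3 and Proposition~2.5 of \citeN{AndersenPi07}, which guarantee that a stochastic exponential whose instantaneous variance has precisely this structure (deterministic times a time-transformed squared Bessel of dimension greater than two) is a true martingale. You instead carry out a self-contained argument via localization, Girsanov on the stopped process, and non-explosion of the $Q_{S^*}$-dynamics---the standard ``no loss of mass under the candidate measure'' criterion. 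The paper's decomposition is more efficient in one respect: writing $S^*_t=e^{\bar\lambda^*\bar a t}\bar S^*_t$ simultaneously settles the $P$-boundary behaviour and supplies the structure needed for the cited lemma, whereas your treatment of the $P$-dynamics near zero is comparatively informal. Conversely, your route is independent of the external reference and makes the mechanism transparent. Your closed-form identity for $\ln\Lambda_{S^*}(t)$ and the resulting uniform bound $\Lambda_{S^*}(t)\le\exp\big(c_{t_0}S^*_{t_0}+\bar\lambda^*\bar a(t-t_0)\big)$ in the regime $\bar\lambda^*\le2$ is a neat addition that does not appear in the paper.
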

\begin{proof}
	Under the above assumptions it follows from \eqref{dS*T} that $S^*_t$ is the product of a  time-transformed squared Bessel process with value
	 \begin{equation}\bar S^*_t=S^*_te^{-\bar\lambda^* \bar a t}\end{equation} 
	and the deterministic exponential function of time $e^{\bar\lambda^* a t}$. Therefore, by \eqref{sigma2'}  the squared  volatility  $( \sigma^{S^*}(t))^2$
	  of the Radon-Nikodym derivative $\Lambda_{S^*}$ of the BN pricing measure $Q_{S^*}$ equals  the product
	of the deterministic exponential function of time \begin{equation}(\bar \lambda^*)^2\frac{\bar a}{4e^{\bar \tau_{t_0}+\bar a(1-\bar \lambda^*)t}}\end{equation} 
	and the   time-transformed squared Bessel process of dimension four $\bar S^*_t$. \\	When the squared volatility of a local martingale is the product of a  time-transformed squared Bessel process of dimension greater than two and a deterministic  function of time, it follows  directly that the local martingale  $\Lambda_{S^*}$ 
	is  a  $(P, \mathcal{\underline F})$-martingale by using the argument from   the proof of Proposition 2.5 in \citeN{AndersenPi07}, which is given in their Lemma 2.3. According to the Change of Num\'eraire Theorem in \citeN{GemanElRo95}, it follows that  $Q_{S^*}$ is  an equivalent probability measure, which proves the statement of Theorem \ref{truemartBP}.
\end{proof}

\subsection{Putative Risk-Neutral Pricing Measure}
Risk-neutral pricing employs the savings account $S^0_t=1$ as num\'eraire and the classical finance theory postulates that the putative risk-neutral pricing measure $Q_{S^0}$ is an equivalent probability measure, see \citeN{Cochrane01} and \citeN{DelbaenSc98}. The following result shows that  $Q_{S^0}$ is not an equivalent probability measure   under the above-described model:

\begin{theorem}\label{locmartRNP}
	Under the above assumptions, the Radon-Nikodym derivative $\Lambda_{S^0}$  of the putative risk-neutral pricing measure $Q_{S^0}$ is  a strict  $(P, \mathcal{\underline F})$-supermartingale and   $Q_{S^0}$ is not an equivalent probability measure.
\end{theorem}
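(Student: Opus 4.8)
The plan is to identify the putative risk-neutral density explicitly, transfer the problem to the now genuinely equivalent benchmark-neutral world of Theorem~\ref{truemartBP}, and then read off the answer from the squared-Bessel structure of the stock GOP. Built from the num\'eraire $S^0_t\equiv 1$ in exactly the way \eqref{RND} builds $\Lambda_{S^*}$ from $S^*$, the candidate risk-neutral density is $\Lambda_{S^0}(t)=\frac{dQ_{S^0}}{dP}\big|_{\mathcal{F}_t}=\frac{S^0_t/S^{**}_t}{S^0_{t_0}/S^{**}_{t_0}}=\frac{1}{S^{**}_t}$, using $S^{**}_{t_0}=1$. Applying the It\^o formula to $1/S^{**}_t$ together with \eqref{e.4.2'} shows that $\Lambda_{S^0}$ is a strictly positive $(P,\mathcal{\underline F})$-local martingale with $\Lambda_{S^0}(t_0)=1$, hence a $(P,\mathcal{\underline F})$-supermartingale; it therefore suffices to show that it is \emph{not} a true martingale, i.e.\ that ${\bf E}^P(\Lambda_{S^0}(T))<1$ for some $T>t_0$.

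Next I would pass through $Q_{S^*}$. Since $\Lambda_{S^*}(t)=S^*_t/(S^*_{t_0}S^{**}_t)$ by \eqref{RND}, one has $\Lambda_{S^0}(t)=\frac{S^*_{t_0}}{S^*_t}\,\Lambda_{S^*}(t)=S^*_{t_0}\,\tilde S^0_t\,\Lambda_{S^*}(t)$ with $\tilde S^0_t=1/S^*_t$; equivalently, $Q_{S^0}$ arises from $Q_{S^*}$ by the further change of num\'eraire from $S^*$ to $S^0$, whose density process is $S^*_{t_0}\,\tilde S^0_t$. Because $Q_{S^*}$ is an equivalent probability measure with density $\Lambda_{S^*}$ by Theorem~\ref{truemartBP}, it follows that ${\bf E}^P(\Lambda_{S^0}(T))=S^*_{t_0}\,{\bf E}^{Q_{S^*}}(1/S^*_T)$ for every $T>t_0$. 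Hence the theorem is equivalent to the assertion that $\tilde S^0=1/S^*$, which is certainly a $(Q_{S^*},\mathcal{\underline F})$-local martingale, is in fact a \emph{strict} one: this is the leverage effect seen from the reverse side, the very unboundedness of $\theta_t$ near $S^*_t=0$ that made $\Lambda_{S^*}$ a true martingale now working against $\tilde S^0$.

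To finish I would exploit the squared-Bessel dynamics. By \eqref{lambda1}, \eqref{theta2'''} and \eqref{e.4}, with the trendline $\bar\tau_t=\bar\tau_{t_0}+\bar a t$ as activity time, the stock GOP satisfies $dS^*_t=4\bar a e^{\bar\tau_t}\,dt+\sqrt{4\bar a e^{\bar\tau_t}S^*_t}\,dW^0_t$ under $Q_{S^*}$; both coefficients carry the deterministic factor $\bar a e^{\bar\tau_t}$, so the deterministic time change $u(t)=\int_{t_0}^t\bar a e^{\bar\tau_s}\,ds=e^{\bar\tau_t}-e^{\bar\tau_{t_0}}$ represents $S^*_t=X_{u(t)}$ with $X$ a squared Bessel process of dimension four started at $S^*_{t_0}$, and $u(T)>0$ for $T>t_0$. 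It is classical (via the transition density of the squared Bessel process, see \citeN{RevuzYo99}) that such an $X$ satisfies ${\bf E}(1/X_u)=\frac{1}{S^*_{t_0}}\big(1-e^{-S^*_{t_0}/(2u)}\big)<\frac{1}{S^*_{t_0}}$ for $u>0$ --- equivalently, the reciprocal of a squared Bessel process of dimension above two is a strict local martingale. Substituting $u(T)$ into the identity above gives ${\bf E}^P(\Lambda_{S^0}(T))=1-e^{-S^*_{t_0}/(2u(T))}<1$, so $\Lambda_{S^0}$ is a strict $(P,\mathcal{\underline F})$-supermartingale and, by the Change of Num\'eraire Theorem, $Q_{S^0}$ is not an equivalent probability measure.

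The one delicate point I anticipate is the strictness of $1/S^*$ under $Q_{S^*}$: one must exhibit genuine loss of mass, not merely the local-martingale property. Besides the explicit squared-Bessel computation above, it can also be obtained --- mirroring the proof of Theorem~\ref{truemartBP}, which invoked the sufficient condition of \citeN{AndersenPi07} for the true-martingale property --- from the complementary boundary criterion, since by \eqref{sigma**} and \eqref{sigma10} one has $\sigma^{**}_t=\sigma^{S^*}(t)+\theta_t$ with $\theta_t=\sqrt{4\bar a e^{\bar\tau_t}/S^*_t}\to\infty$ as $S^*_t\downarrow 0$, so that sufficient condition fails and a scale-function or Feller-test analysis of the time-changed squared Bessel process forces the strictness.
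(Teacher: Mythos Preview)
Your proof is correct and follows essentially the same route as the paper: identify $\Lambda_{S^0}=1/S^{**}$, transfer the expectation to $Q_{S^*}$ via Theorem~\ref{truemartBP}, and then use that $1/S^*$ is a strict $(Q_{S^*},\mathcal{\underline F})$-supermartingale because $S^*$ is a time-transformed squared Bessel process of dimension four. The only cosmetic difference is that the paper works with the conditional version ${\bf E}^{P}(\Lambda_{S^0}(s)\mid\mathcal{F}_t)<\Lambda_{S^0}(t)$ and simply cites \citeN{RevuzYo99} for the strict supermartingale property of the inverse squared Bessel process, whereas you compute the explicit mass defect $1-e^{-S^*_{t_0}/(2u(T))}$ --- which is in fact the zero-coupon bond formula \eqref{bond} that appears in the next section.
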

\begin{proof}
	The Radon-Nikodym derivative of the putative risk-neutral pricing measure $Q_{S^0}$ equals
	\begin{equation}\Lambda_{S^0}(t)=\frac{dQ_{S^0}}{dP}|_{\mathcal{F}_t}=\frac{S^0_t}{S^{**}_t}=\frac{1}{S^{**}_t}\end{equation}
	for $t\in[t_0,\infty)$. 
It follows by Theorem \ref{truemartBP} that
	\begin{equation}
	{\bf E}^{P}(\Lambda_{S^0}(s)|\mathcal{F}_t)={\bf E}^{P}(\frac{1}{S^{**}_s}|\mathcal{F}_t)=\frac{S^*_t}{S^{**}_t}{\bf E}^{Q_{S^*}}(\frac{1}{S^{*}_s}|\mathcal{F}_t)
	\end{equation}
	for $t_0\leq t<s<\infty$. Under $Q_{S^*}$  the savings account-denominated stock GOP $S^*_{t}$ is a time-transformed squared Bessel process of dimension four satisfying the SDE \eqref{e.4}. Its inverse $\frac{1}{S^*_t}$ is   a strict $(Q_{S^*}, \mathcal{\underline F})$-supermartingale; see \citeN{RevuzYo99}. Therefore, we have 
	\begin{equation}
	{\bf E}^{P}(\Lambda_{S^0}(s)|\mathcal{F}_t)<\frac{S^*_t}{S^{**}_tS^*_t}=\frac{1}{S^{**}_t}=\Lambda_{S^0}(t)  
	\end{equation}
	for $t_0\leq t\leq s<\infty$,	which shows that $\Lambda_{S^0}$ is a strict $(P, \mathcal{\underline F})$-supermartingale. By application of the Change of Num\'eraire Theorem in \citeN{GemanElRo95}, the putative risk-neutral measure is not an equivalent probability measure, which proves Theorem \ref{locmartRNP}.
\end{proof}\\

\noindent Recall that the leverage effect provided the intuition that the Radon-Nikodym derivative of the BN-pricing measure is a true martingale. Therefore,  Theorem \ref{locmartRNP}   provides the intuition that the Radon-Nikodym derivative of the putative risk-neutral pricing measure is  unlikely  a martingale. 
 This intuition seems  to come close to reality, as we will see below.

\section{Pricing and Hedging of a Zero-Coupon Bond}\setcA
\subsection{BN Pricing of a Zero-Coupon Bond}
 Under the above model, which employs the trendline of the activity time as model for the activity time, the transition probability density for the stock GOP $S^*$ under $Q_{S^*}$ has, by following \citeN{RevuzYo99} or equation (8.7.9) in \citeN{PlatenHe06}, the form 
 \begin{equation}
 p(\bar \tau_t,S^*_0;\bar \tau_s,S^*_s)=\frac{1}{2(e^{\bar \tau_s}-e^{\bar \tau_t})}\left(\frac{S^*_s}{S^*_t}\right)^{\frac{1}{2}}\exp\left\{-\frac{S^*_t+S^*_s}{2(s^{\bar \tau_s}-e^{\bar \tau_t})}\right\}I_1\left(\frac{\sqrt{S^*_tS^*_s}}{e^{\bar \tau_s}-e^{\bar \tau_t}}\right)
 \end{equation} for $t_0\leq t\leq s<\infty$, where $I_1(.)$ denotes the modified Bessel function of the first kind with index $1$; see, e.g., \citeN{AbramowitzSt72}. Consequently, we know the transition probability density of the stock GOP  under the BN pricing measure, where its key characteristic, the trendline of the  activity time, is observable.\\
To illustrate BN pricing and hedging, we consider a  zero-coupon bond with fixed maturity $T\in(t_0,\infty)$ and contingent claim $H_T=1$.
The respective fair zero-coupon bond \begin{equation}P(t,T)=\tilde P(t,T)S^*_t\end{equation}  pays at maturity one unit of the savings account $H_T=S^0_T=1$.  Its value,  obtained via the BN pricing formula \eqref{BPF}, is given in the denomination of the savings account by the explicit formula \begin{equation}\label{bond}
P(t,T)=S^*_t{\bf{E}}^{Q_{S^*}}(\frac{1}{S^*_T}|\mathcal{F}_t)=1-\exp\left\{-\frac{S^*_t}{2(e^{\bar \tau_{T}}-e^{\bar \tau_{t}})}\right\}
\end{equation}
for $t\in[t_0,T)$; see \citeN{Platen02g} or equation (13.3.5) in \citeN{PlatenHe06}. \begin{figure}[h!]
	\centering
	\includegraphics[width=14cm, height=6cm]{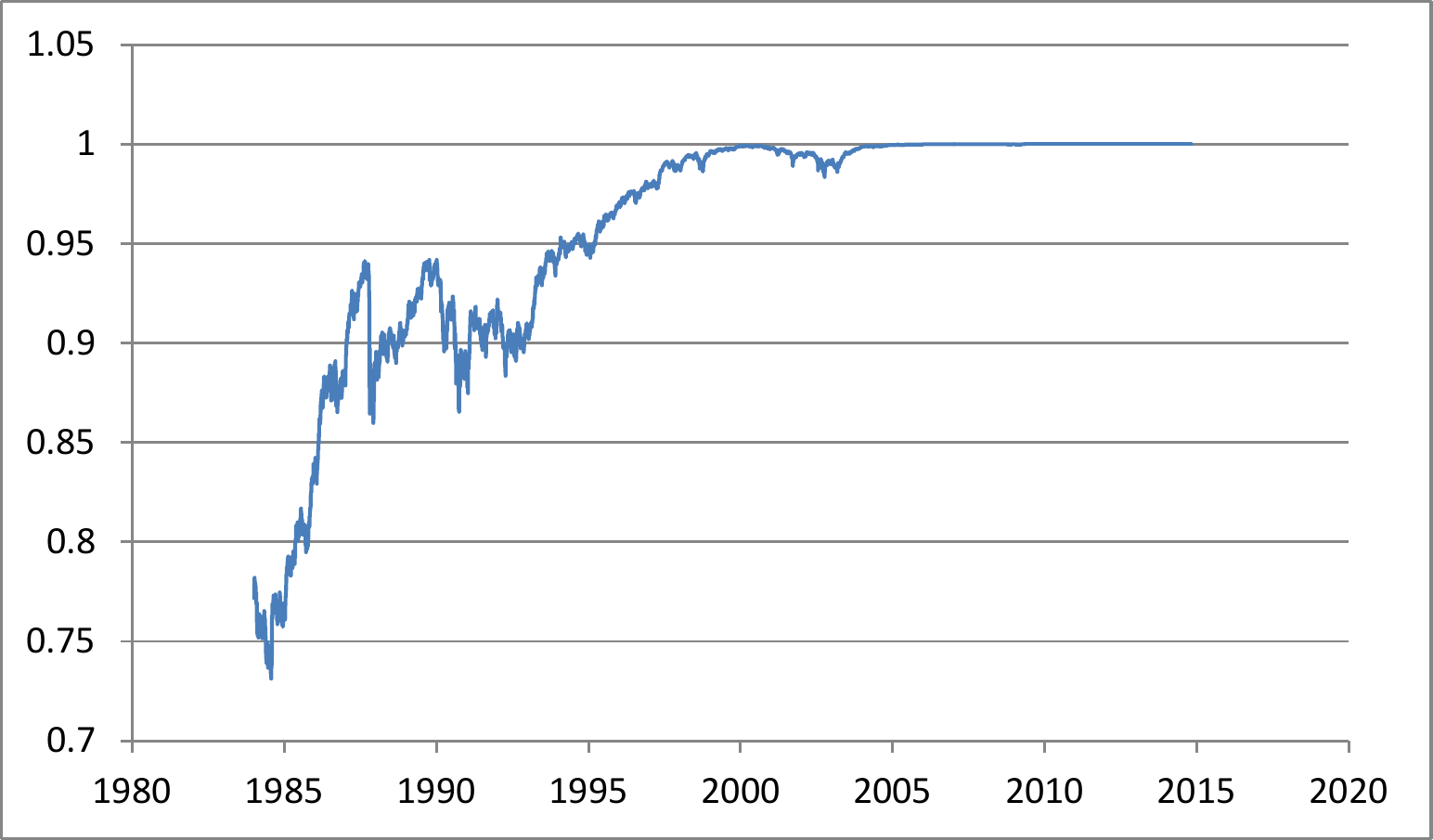}\\
	\caption{Zero-coupon bond $P(t,T)$ with MCI as the benchmark.}\label{FigMCI3}
\end{figure}
We display in Figure \ref{FigMCI3} the zero-coupon bond price with maturity at the end of our observation period that emerges when using the   MCI as a proxy for the stock GOP.  One notes that there exist at least two self-financing portfolios that hedge the payment of one unit of the savings account at the maturity $T$. The classical hedge portfolio would simply purchase one unit of the savings account at the initial time and hold it until maturity. The one that the proposed BN pricing with the MCI as benchmark suggests is less expensive, as  Figure \ref{FigMCI3} shows. It requests only about three-quarters of the  risk-neutral price.
\subsection{BN Hedging of a Zero-Coupon Bond}
The payoff of the zero-coupon bond can be replicated through hedging: We have for the zero-coupon bond $\tilde P(t,T)$, when denominated in the  stock GOP, the SDE

\begin{equation}
d\tilde P(t,T)=\delta^{ S^0}_td\tilde S^0_t
\end{equation}
with the hedge ratio
\begin{equation}\label{delta0}
\delta^{ S^0}_t=\frac{\partial \tilde P(t,T)}{\partial \tilde S^0_t}=1-\exp\left\{-\frac{S^*_t}{2(e^{\bar \tau_{T}}-e^{\bar \tau_{t}})}\right\}\left(1+\frac{S^*_t}{2(e^{\bar \tau_{T}}-e^{\bar \tau_{t}})}\right)
\end{equation}
for the investment in the savings account  $\tilde S^0_t=\frac{1}{S^*_t}$, when denominated in the stock GOP, at time $t\in[0,T)$.
 Figure \ref{FigMCI4} shows the fraction 
\begin{equation}
\pi^{S^*}_t=1-\pi^{S^0}_t=1-\frac{\delta^{S^0}_t \tilde S^0_t}{\tilde P(t,T)}=\left(\frac{1}{P(t,T)}-1\right)\frac{S^*_t}{2(e^{\bar \tau_T}-e^{\bar \tau_t})}
\end{equation}
of the value of the hedge portfolio invested in the stock GOP.
\begin{figure}[h!]
	\centering
	\includegraphics[width=14cm, height=6cm]{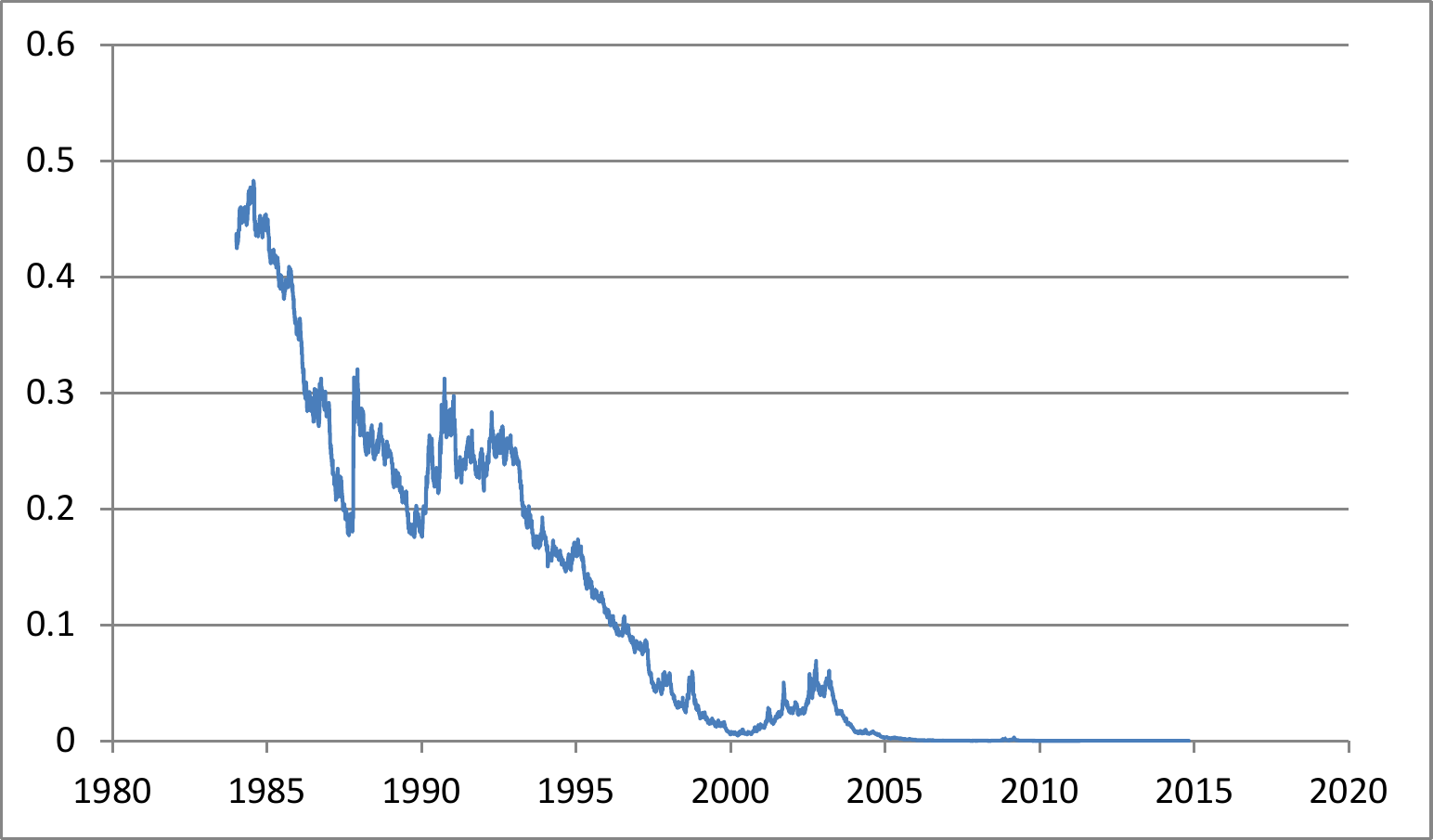}\\
	\caption{Fraction $\pi^{S^*}_t$ invested in the stock GOP.}\label{FigMCI4}
\end{figure}
 One notes  that when the time to maturity is long, the fraction invested in the stock GOP is relatively high. This fraction declines over time according to the prescribed strategy and becomes finally zero at maturity. One could interpret its trajectory as a rigorous description of the glide path for the common financial planning strategy  when it targets at maturity one unit of the savings account and one invests when young in stocks and closer to retirement more and more in the savings account.\\  Note in equation \eqref{delta0} that the strategy sells units of the savings account and buys  units of the stock GOP when the stock GOP value declines.  Since the normalized stock index is under the MMM mean-reverting, this   strategy is rational. However, the   reaction of many investors who invest for retirement appears  to be different in situations when the stock market crashes. The above rational strategy, when widely implemented, e.g., by pension funds and life insurance companies,  could potentially help to stabilize a stock market in times of a major market drawdown. \\
 
\noindent One can only trade at discrete times,  which  creates hedge errors. According to the above-described hedging strategy, the hedge portfolio process $V=\{V_t, t\in[t_0,T]\}$  reallocates daily the holdings in the  savings account and the stock GOP in a self-financing manner. The difference between the BN price  and the hedge portfolio value is shown in Figure \ref{FigMCI5}, which we call the {\em profit and loss}  \begin{equation}
C_t=V_t-P(t,T),
\end{equation}
of the hedge portfolio $V_t$ formed when replicating the zero-coupon bond which was initiated on $t_0=$1 January 1984, reallocated daily in a self-financing manner, and  matured at $T=$ 1 November 2014. \begin{figure}[h!]
	\centering
	\includegraphics[width=14cm, height=6cm]{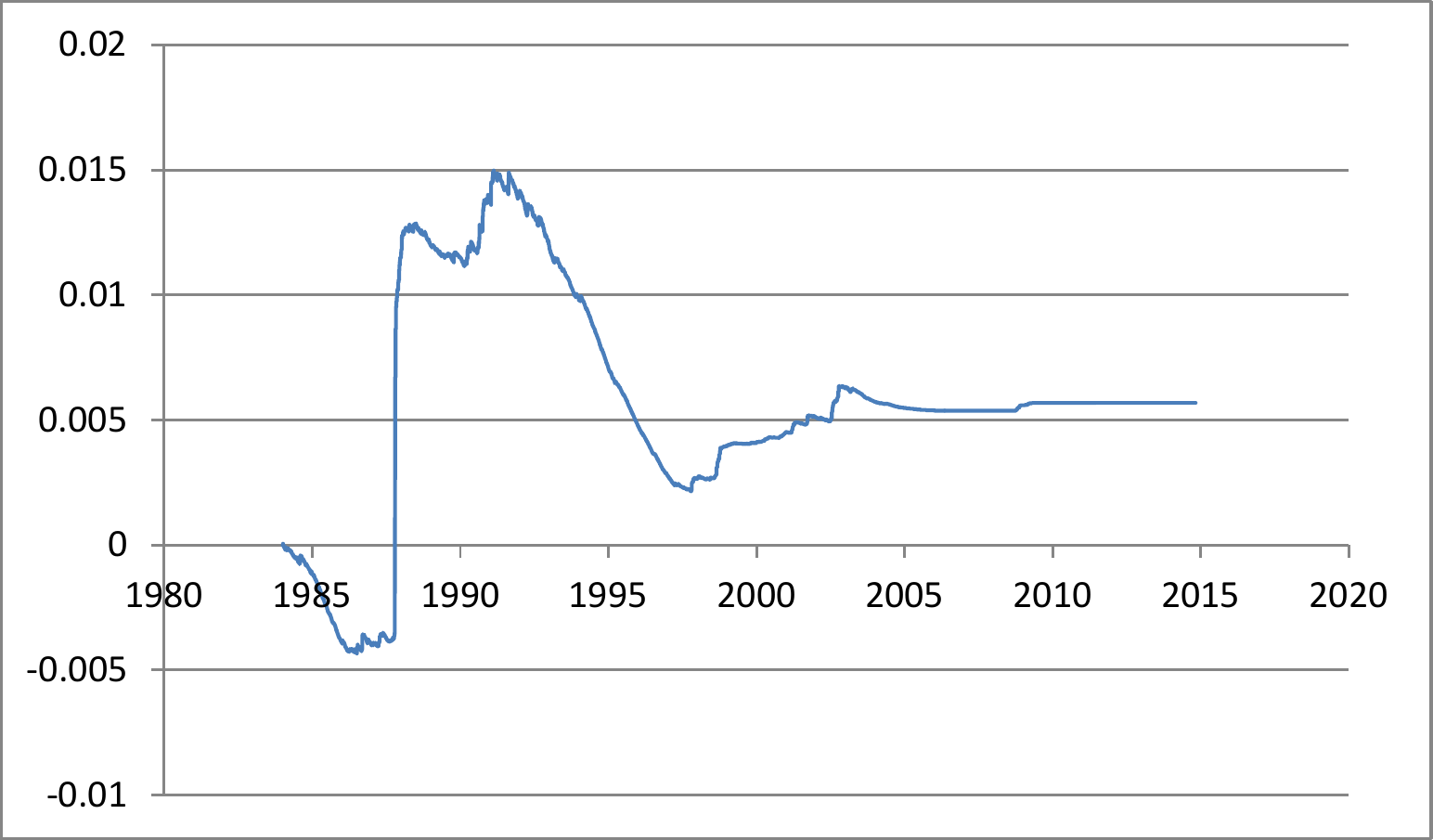}\\
	\caption{Profit and loss $C_t$ for the hedge of a zero-coupon bond.}\label{FigMCI5}
\end{figure}
The absolute value of the profit and loss turns out to be rather  small, which indicates that the  dynamics of the MMM model  well the volatility dynamics of the MCI.  The absolute value of the profit and loss remains in Figure \ref{FigMCI5} always below $1.5$\% of one unit of the savings account value, which is a small   hedge error for a hedging period of 30 years.

\subsection{Enhanced Pricing and Hedging of  Zero-Coupon Bond}
The current paper proposes  a new method that allows to reduce significantly  the hedge error when pricing and hedging a zero-coupon bond.  As can be seen in Figure \ref{FigMCI2}, the observed activity time $\tau_t$ is  different from its trendline $\bar\tau_t$. 
We introduce for the case $\tau_{t_0}<\bar \tau_T$ the stopping time 
\begin{equation}
\rho=\sup\{t\in(t_0,T]:\tau_t<\bar \tau_T\}
\end{equation} as the supremum of all times $t$ where the activity time $\tau_t$ is still smaller than the value $\bar \tau_T$ of the trendline   at maturity.  During  hedging,  one can exploit  the  information that becomes available  through the evolving observed activity time $\tau_t$ and one knows when one has reached the stopping time $\rho$. For $t\in[t_0,\rho)$, the current paper proposes an approximate  formula for the {\em enhanced zero-coupon bond price} $\bar P(t,T)$   in the form
\begin{equation}\label{bond'}
\bar P(t,T)=1-\exp\left\{-\frac{S^*_t}{2(e^{\bar \tau_{T}}-e^{ \tau_{t}})}\right\},
\end{equation}
which yields  the  {\em enhanced fraction} 
\begin{equation}\label{delta'}
\bar \pi^{S^*}_t=\left(\frac{1}{\bar P(t,T)}-1\right)\frac{S^*_t}{2(e^{\bar \tau_T}-e^{ \tau_t})}
\end{equation}
to be invested in the stock GOP. One  stops the hedge at the time $\rho$, where one exchanges all the wealth in the hedge portfolio into units of the savings account. The latter value can be expected to be    close to $1.0$ when the time step size of the hedge was sufficiently small. We denote the resulting hedge portfolio process by $\bar V=\{\bar V_t, t\in[0,T]\}$. In comparison to the previous formulas \eqref{bond} and \eqref{delta0},  in the above two formulas, \eqref{bond'} and \eqref{delta'}, we  substituted  the trendline of the activity time $\bar \tau_t$ by the  observed current activity time $\tau_t$, as long as we have $t<\rho$. The profit and loss \begin{equation}\bar C_t=\bar V_t-\bar P(t,T)\end{equation} of the enhanced hedge  portfolio $\bar V_t$ is shown in Figure \ref{FigMCI6}.  \begin{figure}[h!]
 	\centering
 	\includegraphics[width=14cm, height=6cm]{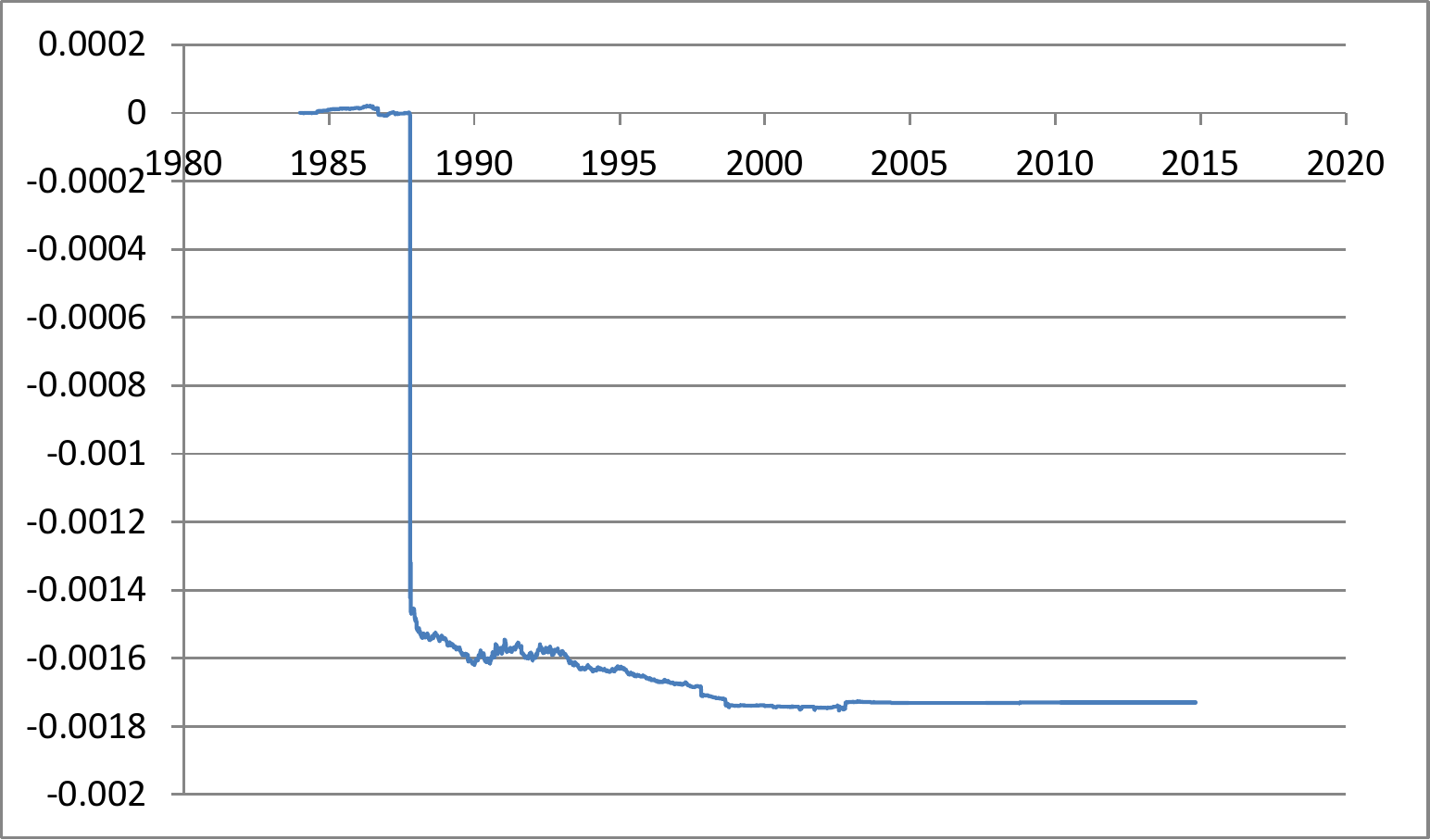}\\
 	\caption{Profit and loss $\bar C_t$ for the enhanced hedge with MCI.}\label{FigMCI6}
 \end{figure} The maximum of its absolute value remains smaller than $0.0018$ of one unit of the savings account for the 30-year daily hedge. The enhanced  hedge provides a remarkably accurate replication of the payout of the zero-coupon bond. Only during the 1987 stock market crash, where days of data  were missing, there is a sudden  increase in the absolute value of the hedge error observable. The otherwise almost perfect hedge indicates that the MMM in activity time models extremely well the   volatility dynamics of the MCI.
 \\
The risk-neutral price   before maturity for the above zero-coupon bond, which pays at maturity one unit of the savings account,  equals always one unit of the savings account.  Formula \eqref{bond} shows that when there is some strictly positive time to maturity,  its  BN price is lower than the risk-neutral price. The respective risk-neutral hedging portfolio is self-financing and delivers the targeted payoff of one unit of the savings account at maturity.  However, due to the strict supermartingale property of the Radon-Nikodym derivative process $\Lambda_{S^0}$ of the putative risk-neutral measure $Q_{S^0}$, the respective risk-neutral price   is  higher than the BN price. \\
     There exists no economic reason to produce the zero-coupon bond  payoff more expensively than necessary by following the popular  risk-neutral pricing rule. The BN price and hedge offer a  more economical way of replicating the targeted long-term payoff, as illustrated in  Figure \ref{FigMCI3}.\\ It will be separately demonstrated by employing faster-growing stock portfolios as benchmark and considering zero-coupon bonds with longer terms to maturity, the price of a zero coupon can become a rather small fraction of one unit of the savings account and the zero-coupon bonds can still be hedged as accurately  as shown above.

  \section*{Conclusion}
The  paper proposes the new method of benchmark-neutral  pricing and hedging, which employs as  num\'eraire the growth optimal portfolio  of the investment universe formed by  the stocks that can be approximated by a well-diversified stock portfolio. For a realistic   model and a long-term zero-coupon bond, it is demonstrated that the proposed benchmark-neutral price can be significantly lower  than the risk-neutral price and the payoff can be accurately hedged over several decades. By applying and extending the proposed benchmark-neutral methodology, it should be   possible to develop  accurate quantitative methods for    a wide range of long-term contracts like the risk-neutral quantitative methods that serve  short-term derivative pricing and hedging.  These new quantitative methods are urgently needed to reduce the costs of pension and insurance contracts.   The next steps toward the wider application  of benchmark-neutral pricing and hedging could target a wide range of  long-term contingent claims, including nonreplicable   claims. One could also revisit the pricing and hedging of short-term derivatives, currently performed under the putative risk-neutral pricing measure, which may provide interesting insights.

 \section*{Acknowledgements}
 The author would like to express his gratitude for
 receiving valuable suggestions on the  paper by  Kevin Fergusson, Len Garces, and Stefan Tappe.
 \section*{Data}
 The EXCEL file that generates the figures of the paper is available at Harvard Dataverse https://doi.org/10.7910/DVN/R9I0AB.

\bibliographystyle{ chicago}
\bibliography{ my}

\newpage

 \end{document}